\newtheorem{thm}{Theorem}
\newtheorem{proposition}[thm]{Proposition}
\begin{document}
\date{}
\title{\textbf{A generalization of Ramos-Louzada distribution: Properties and estimation}}
\author[a]{Hazem Al-Mofleh\thanks{almof1hm@cmich.edu}}
\author[b]{Ahmed Z. Afify}
\affil[a] {\small{Department of Mathematics, Tafila Technical University, Tafila, Jordan.}}
\affil[b] {\small{Department of Statistics, Mathematics and Insurance, Benha University, Egypt.}}

\maketitle

\begin{abstract}
In this paper, a new two-parameter model called generalized Ramos-Louzada (GRL) distribution is proposed. The new model provides more flexibility in modeling data with increasing, decreasing, j shaped and reversed-J shaped hazard rate function. Several statistical and reliability properties of the GRL model are also presented in this paper. The unknown parameters of the GRL distribution are discussed using eight frequentist estimation approaches. These approaches are important to develop a guideline to choose the best method of estimation for the GRL parameters, that would be of great interest to practitioners and applied statisticians. A detailed numerical simulation study is carried out to examine the bias and the mean square error of the proposed estimators. We illustrate the performance of the GRL distribution using two real data sets  from the fields of medicine and geology and both data sets show that the new model is more appropriate as compared to the gamma, Marshall-Olkin exponential, exponentiated exponential, beta exponential, generalized Lindley, Poisson-Lomax, Lindley geometric and Lindley distributions, among others.\\
\textbf{Keywords:} Cram\'{e}r--von Mises Estimation; Maximum Likelihood Estimation; Maximum Product of Spacing Estimation; Right--Tail Anderson-Darling Estimation.\\
\textbf{AMS subject classification:} 62E10, 60K10, 60N05.
\end{abstract}

\section{Introduction}
\label{sec1}

The probability distributions have great importance for modeling data in several areas such as medicine, engineering, and life testing, among others. Ramos and Louzada (2019) recently introduced the one-parameter distribution called Ramos-Louzada (RL) distribution with survival function (SF) given by
\begin{equation}\label{fdpnda}
S(t|\lambda) = \left(\frac{1}{\lambda-1}\right)
\left(\lambda-1+\frac{t}{\lambda} \right)e^{- \frac{t}{\lambda}}, \quad \quad t>0,
\end{equation}
where $\lambda\geq 2$.

The two most common one-parameter distributions are the exponential and Lindley distributions. The important generalizations of the exponential distribution are the Weibull (Weibull, 1951) and exponentiated exponential (Gupta and Kundu, 2001) models. In the case of the Lindley distribution, the power Lindley (Ghitany et al., 2013) and generalized Lindley (Nadarajah et al., 2011) models have play an important role in survival analysis. These two generalizations are obtained by considering a power parameter in the exponential and Lindley distributions.
Ramos and Louzada (2019) showed that (\ref{fdpnda}) outperforms the common exponential and Lindley distributions in many situations.  Therefore, we will propose a new two-parameter extension of the RL distribution by including a power parameter in the baseline model (\ref{fdpnda}). The new proposed model is called a generalized Ramos-Louzada (GRL) distribution.

Let $T$ be a non-negative random variable follows the GRL model, the SF of random variable $T$ is given by
\begin{equation}\label{fdpnd}
S(t|\lambda,\alpha) = \left(\frac{1}{\lambda-1}\right)
\left(\lambda-1+\frac{t^\alpha}{\lambda} \right)e^{- \frac{t^\alpha}{\lambda}},
\end{equation}
where $\lambda(\geq 2)$ and $\alpha (>0)$ are shape parameters.

Some mathematical properties, parameters estimation by eight different methods, simulations and applications are studied and proposed in this paper.

We can summarize the motivations of this proposed model as:
(i) Its cumulative distribution function (CDF) and hazard rate function (HRF) have simple closed forms, hence, it can be utilized to analyze censored data; (ii) It can be represented as a mixture of Weibull distribution and a particular case of the generalized gamma distribution (Stacy, 1962) (See Section~\ref{sec2}); (iii) The GRL distribution exhibits increasing, decreasing, reversed-J shaped and J shaped hazard rates, whereas the RL model exhibits only increasing hazard rate; and (iv) The GRL distribution outperforms many of the well-known distributions namely: gamma, Marshall-Olkin exponential, exponentiated exponential, Beta exponential, generalized Lindley, Poisson-Lomax, Lindley geometric and Lindley distributions, using two real data sets from the fields of medicine and geology.
Furthermore, another important goal of this paper is to show how several frequentist estimators of the GRL parameters perform to choose the best parameter estimation method for the proposed model, which would be a great interest to practitioners and applied statisticians.

The skewness of the GRL distribution varies within the interval (-0.68158, 5.17333), whereas the skewness of the RL distribution can only range in the interval (1.41421, 1.85648) when the parameter $\lambda$ takes values (2, 3.1, 4, 5.5). Furthermore, the spread of the kurtosis of the GRL distribution is much larger ranging, which is from 2.69447 to 52.6597, whereas the spread of the kurtosis of the RL distribution can only varies from 6.00 to 8.04 for the same values shown above for the parameter $\lambda$.

This paper is organized as follows: Section~\ref{sec2} introduces the GRL distribution and its properties such as: quantile function, moments, order statistics and HRF. Section~\ref{sec3} presents the estimators of the GRL unknown parameters based on eight classical estimation methods. Simulation study, to evaluate and compare the behavior of the eight classical estimation methods, is discussed in Section~\ref{sec4}. Section~\ref{sec5} illustrates the relevance of GRL model for two real lifetime data sets.  Section~\ref{sec6} summarizes the present study.

\section{Properties}
\label{sec2}

Let $T$ be a random variable follows the GRL model with SF given in (\ref{fdpnd}), the probability density function (PDF) of the random variable $T$ is given by

\begin{equation}\label{RLPDF}
\quad \quad f(t;\pmb \phi)=\frac{\alpha}{\lambda(\lambda-1)}t^{\alpha-1}\left(\lambda+\frac{t^\alpha}{\lambda} -2\right)e^{-\frac{t^\alpha}{\lambda} }, \quad t>0, \quad \lambda\geq 2, \quad \alpha>0,
\end{equation}

\noindent where $\pmb \phi=(\lambda,\alpha)^{\intercal}$.
Note that, the RL model can be obtained from (\ref{RLPDF}) when $ \alpha=1$.

\subsection{Shapes}

The behavior of the PDF in (\ref{RLPDF}) when $t\rightarrow0$ and $t\rightarrow\infty$ are, respectively, given by

\begin{equation*}
\lim_{t\rightarrow 0}f(t;\pmb \phi)=
\begin{cases}
 \infty, & \text{if }\alpha<1 \\
\dfrac{(\lambda-2)}{\lambda(\lambda-1)}, & \text{if }\alpha=1 \\
 0, & \text{if }\alpha>1
\end{cases},
\end{equation*}

\begin{equation*}
\lim_{t\rightarrow \infty} f(t;\pmb \phi)= 0.
\end{equation*}

In Figure \ref{density}, we present the shapes of the PDF for different values of the parameters $\lambda$ and $\alpha$. The shape of PDF of the GRL model can be right-skewed and reversed-J shaped.

\begin{figure}[H]
\begin{subfigure}{.5\textwidth}
  \centering
  \includegraphics[width=1\linewidth]{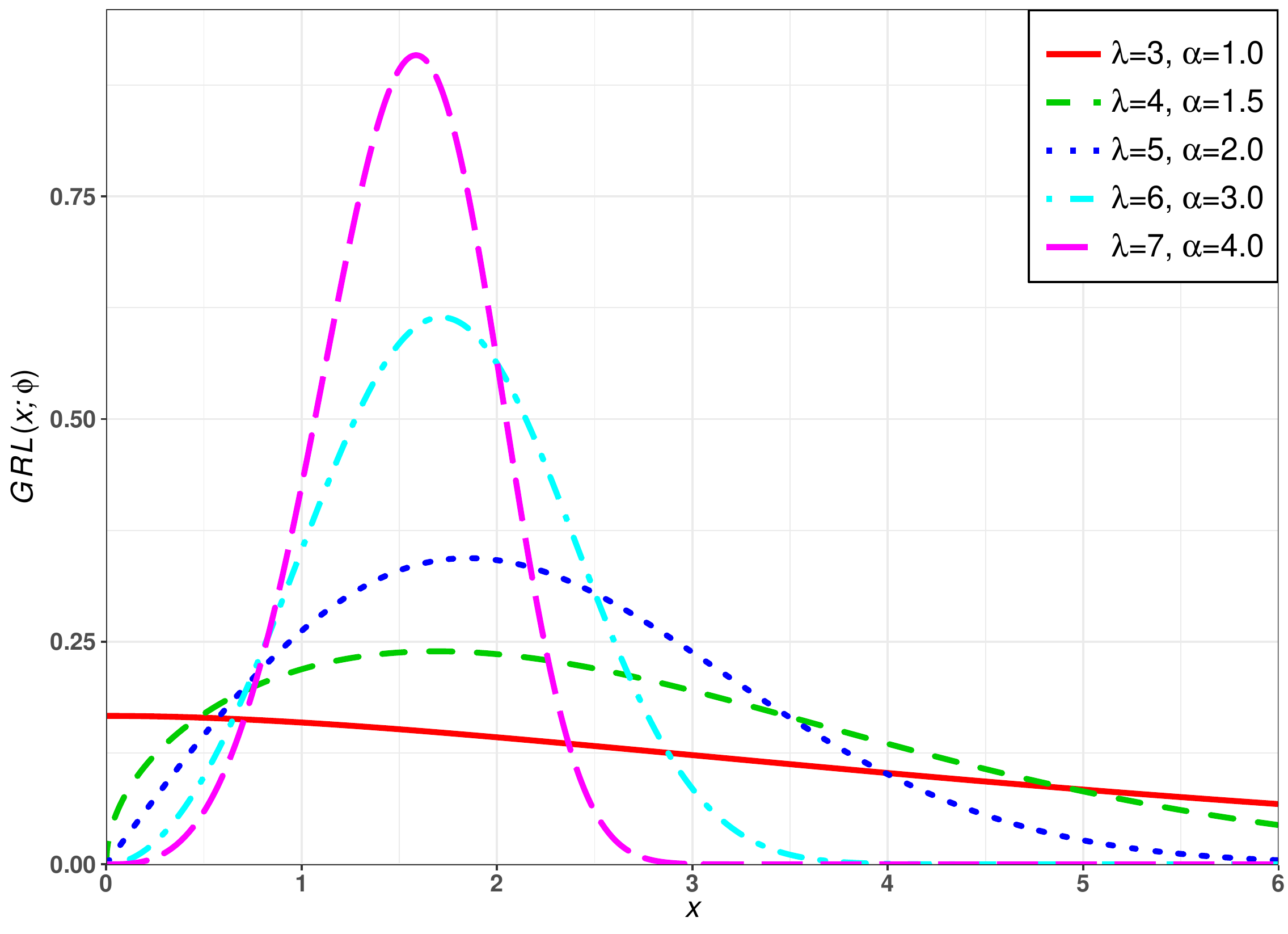}
  \caption*{}
\end{subfigure}
    \begin{subfigure}{.5\textwidth}
  \centering
  \includegraphics[width=1\linewidth]{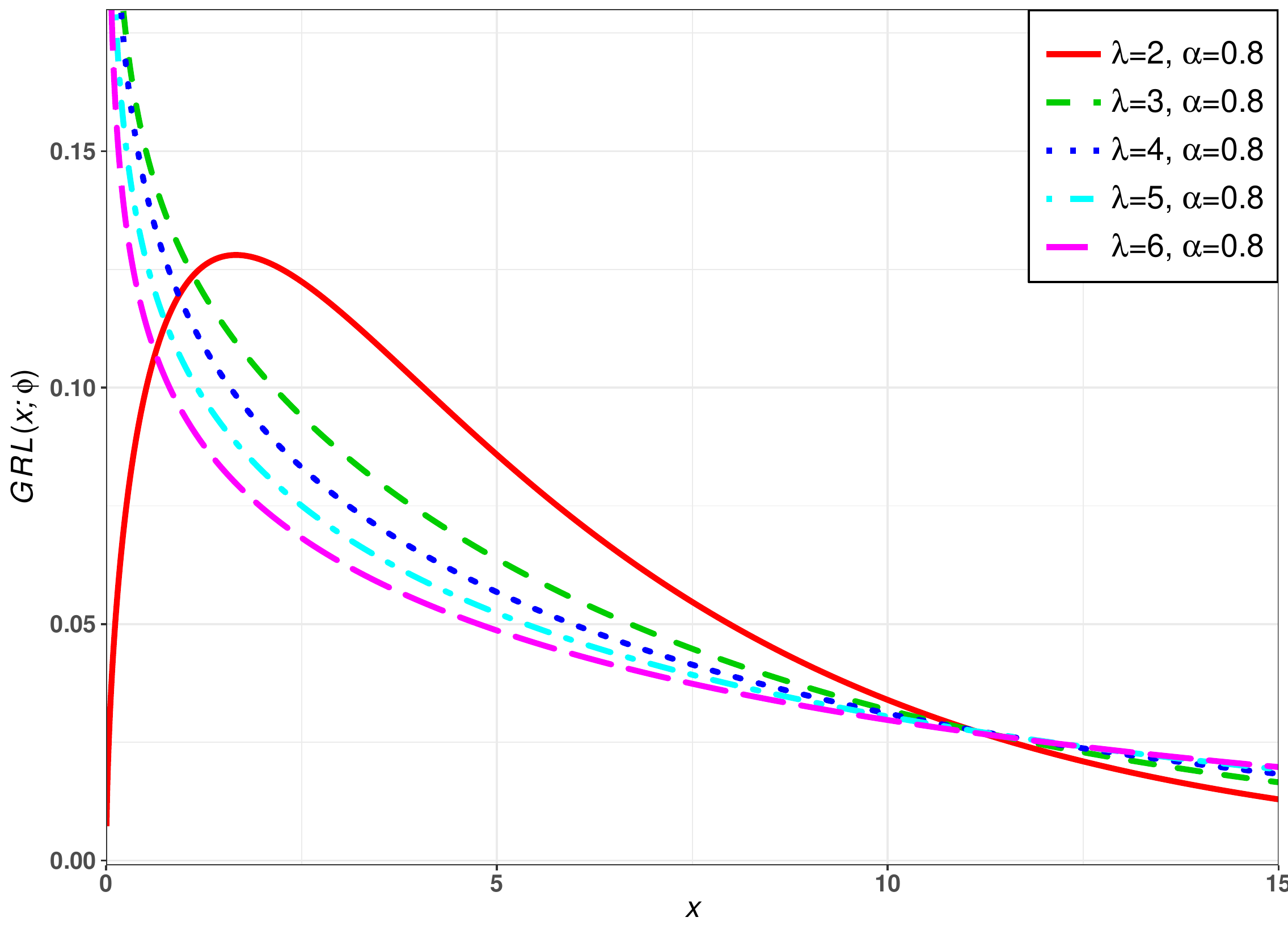}
  \caption*{}
\end{subfigure}
\caption{PDF shapes for the GRL distribution considering different
values of $\lambda $ and $\alpha$.}
  \label{density}
\end{figure}

The CDF of the GRL distribution is given by

\begin{equation}\label{cdf}
F(t;\pmb \phi) = 1-\left(\frac{1}{\lambda-1}\right)
\left(\lambda-1+\frac{t^\alpha}{\lambda} \right)e^{- \frac{t^\alpha}{\lambda}}, \quad t>0, \quad \lambda\geq 2, \quad \alpha>0. \end{equation}

The HRF of $T$ is given by
\begin{equation}\label{fhwl}
h(t;\pmb \phi)=\frac{f(t|\alpha,\lambda)}{S(t|\alpha,\lambda)}=\frac{\alpha t^\alpha}{\lambda}\frac{(\lambda^2+t^\alpha-2\lambda)}{(\lambda^2+t^\alpha-\lambda)} .
\end{equation}

Figure \ref{HRF} displays some possible shapes of HRF of the GRL for some selected values of $\lambda$ and $\alpha$. The shape of HRF can be increasing, decreasing, reversed-J shaped and J shaped hazard rates.

\begin{figure}[!h]
\centering
\includegraphics[scale=0.5]{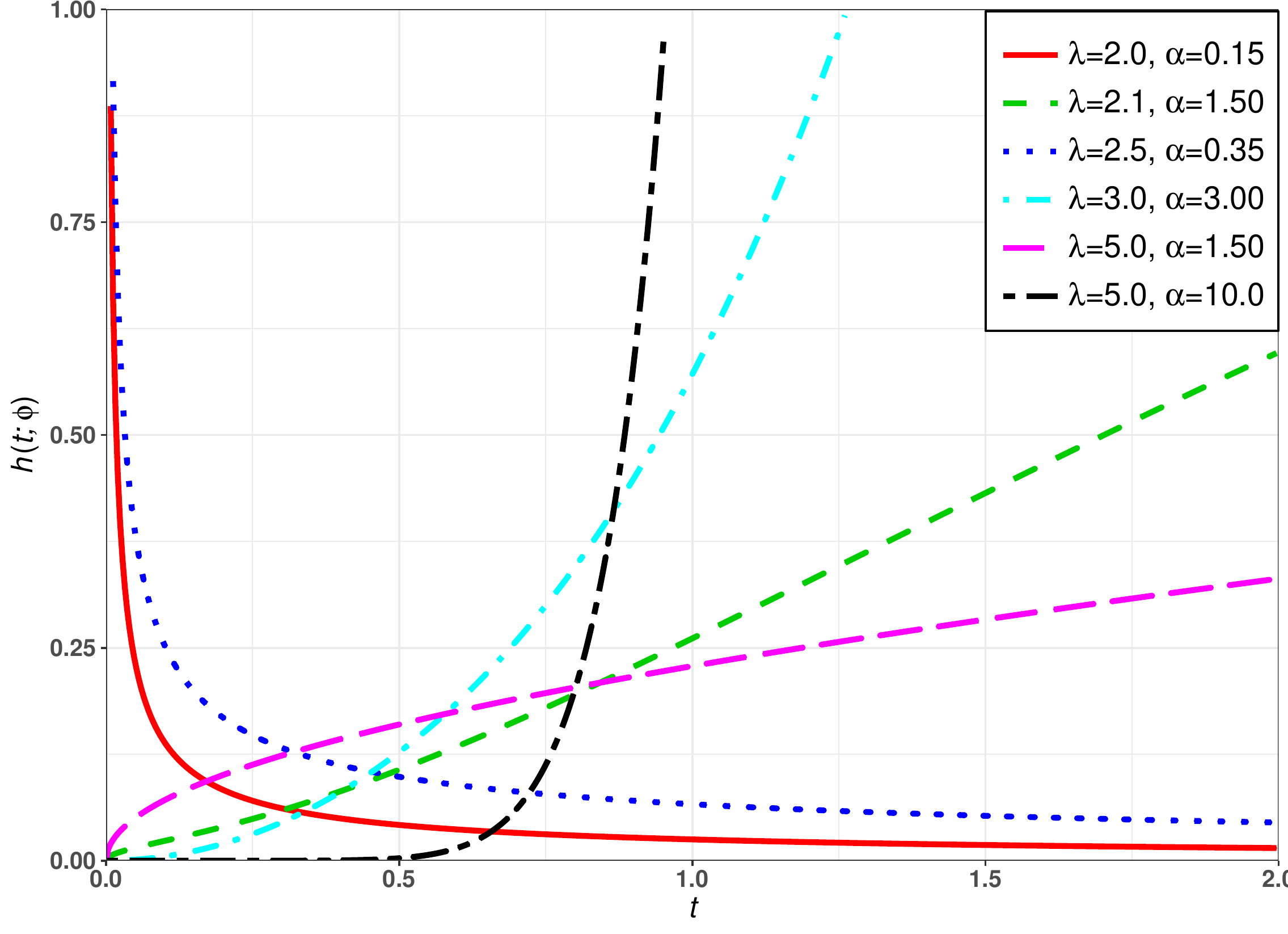}
\caption{Hazard rat function shapes for the GRL considering
different values of $\lambda $ and $\alpha$.}
\label{HRF}
\end{figure}

The GRL distribution can be expressed as a two-component mixture
\begin{equation}\label{eqmix2}
f(t;\pmb \phi)= pf_1(t;\pmb \phi)+(1-p)f_2(t;\pmb \phi)\, ,
\end{equation}
where  $1-p=1/(\lambda-1)$ (or $p=(\lambda-2)/(\lambda-1)$) and
\begin{equation}\label{mixeq1} f_j(t;\pmb \phi)=\frac{\alpha}{\lambda^j} t^{j\alpha-1}e^{- \frac{t^\alpha}{\lambda}} \quad \mbox{for} \quad  j=1,2.\end{equation}

Note that, $f_1(\cdot)$ is a Weibull distribution and $f_2(\cdot)$ is a particular case of the generalized gamma distribution (Stacy,  1962). Then, after some algebra, Equation (\ref{eqmix2}) reduces to the PDF in (\ref{RLPDF}).

\subsection{Quantile function}
The quantile function (QF) of the GRL distribution defined in (\ref{RLPDF}), say $Q(p)$ where $0<p<1$, it can be obtained by solving the equation $F(Q(p))=p$ in (\ref{cdf}) for $Q(p)$ in terms of $p$, and this implies
\begin{align}\label{qf}
Q(p)=\left(-\lambda\left[ W_{-1} \left[(\lambda -1) (p-1) e^{1-\lambda }\right]+\lambda-1 \right ] \right)^{1/\alpha },
\end{align}
where $W_{-1}(x)$ is the negative branch of the Lambert function.

\subsection{Moments}

Moments play an important role in statistical theory, in this section we provide the $r-$th moment, the mean and variance for the GRL distribution.

\begin{proposition}\label{moments1} For the random variable $T$ follows the GRL distribution,
	the $r-$th moment is given by
	\begin{equation}\label{rmNM}
	\mu_r= E[T^r]=\frac{r\lambda^{\frac{r}{\alpha}}}{\alpha(\lambda-1)}\left(\lambda+\frac{r}{\alpha}-1 \right)\Gamma\left(\frac{r}{\alpha}\right)  \, , \quad \text{for} \quad r\in\mathbb{N} .
	\end{equation}
\end{proposition}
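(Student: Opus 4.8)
The plan is to compute $\mu_r=E[T^r]=\int_0^\infty t^r f(t;\pmb\phi)\,dt$ directly from the PDF in (\ref{RLPDF}), using the change of variables $u=t^\alpha/\lambda$ to collapse the integral into a linear combination of Gamma integrals. An equivalent route, which I would mention, is to exploit the two-component mixture representation (\ref{eqmix2})--(\ref{mixeq1}): since $f_1$ is a Weibull density and $f_2$ a generalized-gamma density, one can compute the $r$-th moment of each component in closed form and then take the weighted sum with weights $p=(\lambda-2)/(\lambda-1)$ and $1-p=1/(\lambda-1)$; the bookkeeping is the same either way.

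Before computing, I would observe that the integral converges for every $r\in\mathbb{N}$: near $t=0$ the integrand behaves like $t^{r+\alpha-1}$, which is integrable because $r+\alpha>0$, and as $t\to\infty$ the factor $e^{-t^\alpha/\lambda}$ dominates any power of $t$. Hence $\mu_r$ is well defined, and $\Gamma(r/\alpha)$ appearing in the answer is finite since $r/\alpha>0$.

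With the substitution $u=t^\alpha/\lambda$ one has $t=(\lambda u)^{1/\alpha}$, $t^{\alpha-1}\,dt=(\lambda/\alpha)\,du$, $t^\alpha/\lambda=u$, and $t^r=\lambda^{r/\alpha}u^{r/\alpha}$. Substituting into (\ref{RLPDF}) the constants cancel cleanly and one is left with
\[
\mu_r=\frac{\lambda^{r/\alpha}}{\lambda-1}\int_0^\infty u^{r/\alpha}\,(\lambda-2+u)\,e^{-u}\,du
=\frac{\lambda^{r/\alpha}}{\lambda-1}\left[(\lambda-2)\,\Gamma\!\left(\tfrac{r}{\alpha}+1\right)+\Gamma\!\left(\tfrac{r}{\alpha}+2\right)\right],
\]
where the last equality is just the definition of the Gamma function applied to the two pieces $u^{r/\alpha}$ and $u^{r/\alpha+1}$.

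Finally I would apply the recurrence $\Gamma(z+1)=z\,\Gamma(z)$ twice, writing $\Gamma(r/\alpha+1)=(r/\alpha)\Gamma(r/\alpha)$ and $\Gamma(r/\alpha+2)=(r/\alpha+1)(r/\alpha)\Gamma(r/\alpha)$, and factor out $(r/\alpha)\Gamma(r/\alpha)$; the bracket then simplifies to $(\lambda-2)+(r/\alpha+1)=\lambda-1+r/\alpha$, which yields exactly (\ref{rmNM}). There is no genuine obstacle in this proof; the only points requiring care are keeping the powers of $\lambda$ straight through the substitution and carrying out the Gamma-recurrence simplification correctly, plus the preliminary remark that the moment exists for all $r\in\mathbb{N}$.
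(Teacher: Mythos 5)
Your proof is correct and, despite substituting $u=t^\alpha/\lambda$ directly rather than formally invoking the mixture representation, it arrives at exactly the same intermediate expression $\frac{\lambda^{r/\alpha}}{\lambda-1}\left[(\lambda-2)\Gamma\!\left(\tfrac{r}{\alpha}+1\right)+\Gamma\!\left(\tfrac{r}{\alpha}+2\right)\right]$ that the paper obtains from the weighted sum of the Weibull and generalized-gamma component moments, so the argument is essentially the same as the paper's. The final simplification via the Gamma recurrence also matches, and your added remark on convergence of the integral is a harmless bonus.
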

\begin{proof}Note that, the $r-$th moment for the random variable in (\ref{mixeq1}) is given by
	\[
	E[T^r;\alpha,\lambda]=\lambda^{\frac{r}{\alpha}}\Gamma\left(\frac{r}{\alpha}+j\right), \ \mbox{ for } \ j=1,2 .
	\]

	Since the GRL model can be expressed as a two-component mixture, as in (\ref{eqmix2}), so we have
	\[
	\begin{aligned}
	\mu_r&= E[T^r]=  \int_{0}^{\infty}t^rf(t|\alpha,\lambda)dt
	= pE[X^r;\alpha,\lambda]+(1-p)E[X^r;\alpha+1,\lambda] \\& = \left(\frac{\lambda-2}{\lambda-1}\right)\lambda^{\frac{r}{\alpha}}\Gamma\left(\frac{r}{\alpha}+1\right)+\frac{1}{(\lambda-1)}\lambda^{\frac{r}{\alpha}}\Gamma\left(\frac{r}{\alpha}+2\right)\\&= \frac{r\lambda^{\frac{r}{\alpha}}}{\alpha(\lambda-1)}\left(\lambda+\frac{r}{\alpha}-1 \right)\Gamma\left(\frac{r}{\alpha}\right)  \cdot
	\end{aligned}
	\]
\end{proof}

\begin{proposition} The random variable $T$ follows the PDF in (\ref{RLPDF}), its mean and variance, respectively, are  given by

\begin{equation}\label{lem2}
\mu=\frac{\lambda^{\frac{1}{\alpha}}}{\alpha(\lambda-1)}\Gamma\left(\frac{1}{\alpha}\right)\left(\lambda+\frac{1}{\alpha}-1 \right) \ \ \text{ and }	
\end{equation}

\begin{equation}
\sigma^2=\frac{\lambda^{\frac{2}{\alpha}}}{\alpha^2(\lambda-1)^2}\left[2\alpha(\lambda-1)\Gamma\left(\frac{2}{\alpha}\right)\left(\lambda+\frac{2}{\alpha}-1 \right)-\Gamma^2\left(\frac{1}{\alpha}\right)\left(\lambda+\frac{1}{\alpha}-1 \right)^2\right].
\end{equation}

\end{proposition}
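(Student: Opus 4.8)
The plan is to obtain both formulas as direct corollaries of Proposition~\ref{moments1}. For the mean, I would simply specialize the $r$-th moment formula \eqref{rmNM} to $r=1$: substituting $r=1$ into $\mu_r=\frac{r\lambda^{r/\alpha}}{\alpha(\lambda-1)}\bigl(\lambda+\frac{r}{\alpha}-1\bigr)\Gamma\bigl(\frac{r}{\alpha}\bigr)$ yields \eqref{lem2} immediately, with no further manipulation required. This step is entirely routine.

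For the variance, I would use the identity $\sigma^2 = \mu_2 - \mu_1^2$. First I would record $\mu_1$ from \eqref{lem2} and compute $\mu_2$ by setting $r=2$ in \eqref{rmNM}, giving $\mu_2 = \frac{2\lambda^{2/\alpha}}{\alpha(\lambda-1)}\bigl(\lambda+\frac{2}{\alpha}-1\bigr)\Gamma\bigl(\frac{2}{\alpha}\bigr)$. Then I would form $\mu_2-\mu_1^2$ and factor out the common quantity $\dfrac{\lambda^{2/\alpha}}{\alpha^2(\lambda-1)^2}$ from both terms: the $\mu_2$ term contributes $2\alpha(\lambda-1)\Gamma\bigl(\frac{2}{\alpha}\bigr)\bigl(\lambda+\frac{2}{\alpha}-1\bigr)$ inside the bracket, and the $\mu_1^2$ term, after squaring \eqref{lem2}, contributes $-\Gamma^2\bigl(\frac{1}{\alpha}\bigr)\bigl(\lambda+\frac{1}{\alpha}-1\bigr)^2$. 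Collecting these reproduces the displayed expression for $\sigma^2$.

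There is essentially no hard part here; the only thing to be careful about is bookkeeping of the powers of $\lambda$, $\alpha$, and $(\lambda-1)$ when taking the common factor outside the bracket, so that the $\lambda^{2/\alpha}$ from $\mu_2$ and the $(\lambda^{1/\alpha})^2=\lambda^{2/\alpha}$ from $\mu_1^2$ combine correctly and the denominators $\alpha(\lambda-1)$ versus $\alpha^2(\lambda-1)^2$ are reconciled. Since Proposition~\ref{moments1} has already been established, the entire argument is a one- or two-line specialization plus an elementary algebraic rearrangement, so I would present it compactly rather than belaboring the computation.
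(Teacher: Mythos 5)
Your proposal is correct and follows exactly the paper's own route: the authors also obtain the mean by setting $r=1$ in Proposition~\ref{moments1} and the variance via $\sigma^2=E[T^2]-\mu^2$ with the same algebraic rearrangement. Your version merely spells out the bookkeeping that the paper compresses into ``with some algebra,'' and the factoring you describe is accurate.
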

\begin{proof} From (\ref{rmNM}) and considering $r=1$, it follows that $\mu_1=\mu$. The second result can be obtained by using $\sigma^2=E[T^2]-\mu^2$ and with some algebra the proof is completed.
\end{proof}

\begin{proposition} The $r-$th central moment for the GRL distribution is given by
	\begin{equation}\label{rcmNMp}
	\begin{aligned}
	M_r&= E[T-\mu]^r= \sum_{i=0}^{r}\binom{r}{i}(-\mu)^{r-i}E[T^i] \\ &= \sum_{i=0}^{r}\binom{r}{i}\left[-\frac{\lambda^{\frac{1}{\alpha}}\Gamma\left(\alpha^{-1}\right)}{\alpha(\lambda-1)}\left(\lambda+\frac{1}{\alpha}-1 \right) \right]^{r-i}\left[\frac{i\lambda^{\frac{i}{\alpha}}}{\alpha(\lambda-1)}\left(\lambda+\frac{i}{\alpha}-1 \right)\Gamma\left(\frac{i}{\alpha}\right) \right] .
	\end{aligned}
	\end{equation}
\end{proposition}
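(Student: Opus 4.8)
The plan is to obtain the result directly from the binomial theorem together with the raw-moment formula already established in Proposition~\ref{moments1}. First I would expand the $r$-th power of the centered variable, $(T-\mu)^r=\sum_{i=0}^{r}\binom{r}{i}T^{i}(-\mu)^{r-i}$, which is just the binomial formula applied to $T+(-\mu)$. Taking expectations of both sides and invoking linearity of expectation over this finite sum gives $M_r=E\big[(T-\mu)^r\big]=\sum_{i=0}^{r}\binom{r}{i}(-\mu)^{r-i}E[T^{i}]$, which is the first displayed equality. The interchange of expectation and summation is legitimate because each $E[T^{i}]$, $i=0,1,\dots,r$, is finite: by Proposition~\ref{moments1} it equals a constant times $\Gamma(i/\alpha)$, which is finite for every $i\in\mathbb{N}$ and $\alpha>0$ (and $E[T^{0}]=1$).

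For the second equality I would substitute the two closed forms already derived. From Equation~(\ref{lem2}), $\mu=\dfrac{\lambda^{1/\alpha}\Gamma(1/\alpha)}{\alpha(\lambda-1)}\big(\lambda+\tfrac{1}{\alpha}-1\big)$, so that $(-\mu)^{r-i}=\Big[-\dfrac{\lambda^{1/\alpha}\Gamma(1/\alpha)}{\alpha(\lambda-1)}\big(\lambda+\tfrac{1}{\alpha}-1\big)\Big]^{r-i}$; and from Equation~(\ref{rmNM}), $E[T^{i}]=\dfrac{i\lambda^{i/\alpha}}{\alpha(\lambda-1)}\big(\lambda+\tfrac{i}{\alpha}-1\big)\Gamma(i/\alpha)$. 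Inserting these two expressions into the sum obtained in the first step yields, term by term, exactly the claimed formula for $M_r$.

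There is essentially no obstacle here; the computation is a one-line substitution once the two moment formulas are in hand. The only point worth a remark is the $i=0$ summand: the generic term $\dfrac{i\lambda^{i/\alpha}}{\alpha(\lambda-1)}\big(\lambda+\tfrac{i}{\alpha}-1\big)\Gamma(i/\alpha)$ should be read through the identity $\tfrac{i}{\alpha}\Gamma(i/\alpha)=\Gamma\big(\tfrac{i}{\alpha}+1\big)$, so that at $i=0$ it equals $\tfrac{1}{\lambda-1}(\lambda-1)=1=E[T^{0}]$, and the $i=0$ contribution to the sum is correctly $(-\mu)^{r}$.
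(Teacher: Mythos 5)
Your argument is correct and follows the same route the paper intends: the paper's proof is the one-line remark that the result "follows directly from Proposition~\ref{moments1}," and your binomial expansion plus substitution of $\mu$ from (\ref{lem2}) and $E[T^i]$ from (\ref{rmNM}) is exactly that computation written out. Your observation that the $i=0$ summand must be read via $\tfrac{i}{\alpha}\Gamma(i/\alpha)=\Gamma(\tfrac{i}{\alpha}+1)$ so that it correctly evaluates to $1$ is a worthwhile clarification the paper leaves implicit.
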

\begin{proof} The result follows directly from the Proposition \ref{moments1}.\end{proof}

The mean, variance, skewness and kurtosis of the GRL distribution are computed numerically for different values of the parameters $\lambda$ and $\alpha$, using the R software. Table~\ref{tab:tab1} displays these numerical values. From Table~\ref{tab:tab1} we can indicate that the skewness of the GRL distribution can range in the interval $(-0.68158, 5.17333)$. The spread for the GRL kurtosis is much larger ranging from 2.69447 to 52.6597. Further, the GRL model can be left skewed or right skewed. Hence, the GRL distribution is a flexible distribution which can be used in modelling skewed data.

\begin{table}[H]
\centering
\caption{Mean, variance, skewness and kurtosis of the GRL distribution for different values of the parameters $\lambda$ and $\alpha$}
 \label{tab:tab1}
\begin{tabular}{crrrr}
\\ \hline \hline \\
${\pmb \phi}^{\intercal}$                  & Mean      &Variance &  Skewness &  Kurtosis
 \\ \\ \hline \\
$(\lambda=2.0,\alpha=0.5)$ & 24.00 & 1344.00  & 4.30     & 37.41    \\
$(\lambda=2.0,\alpha=0.7)$ & 8.27  & 72.44    & 2.39     & 12.65    \\
$(\lambda=2.0,\alpha=2.5)$ & 1.64  & 0.23     & 0.20     & 2.89     \\
$(\lambda=2.0,\alpha=3.5)$ & 1.41  & 0.09     & -0.04    & 2.88     \\
$(\lambda=3.1,\alpha=0.5)$ & 37.52 & 5030.15  & 5.17     & 52.66    \\
$(\lambda=3.1,\alpha=0.7)$ & 10.71 & 186.23   & 2.85     & 16.24    \\
$(\lambda=3.1,\alpha=2.5)$ & 1.66  & 0.42     & 0.18     & 2.73     \\
$(\lambda=3.1,\alpha=3.1)$ & 1.49  & 0.23     & -0.03    & 2.71     \\
$(\lambda=4.0,\alpha=1.5)$ & 2.78  & 3.19     & 0.92     & 3.89     \\
$(\lambda=4.0,\alpha=3.5)$ & 1.46  & 0.20     & -0.07    & 2.71     \\
$(\lambda=4.0,\alpha=5.0)$ & 1.29  & 0.08     & -0.35    & 2.95     \\
$(\lambda=4.0,\alpha=10)$  & 1.13  & 0.02     & -0.73    & 3.74     \\
$(\lambda=5.5,\alpha=3.5)$ & 1.56  & 0.23     & -0.03    & 2.69     \\
$(\lambda=5.5,\alpha=5.0)$ & 1.35  & 0.09     & -0.30    & 2.90     \\
$(\lambda=5.5,\alpha=5.5)$ & 1.31  & 0.07     & -0.36    & 2.99     \\
$(\lambda=5.5,\alpha=10)$  & 1.15  & 0.02     & -0.68    & 3.64     \\ \\ \hline \hline
\end{tabular}
\end{table}

\subsection{Order statistics}\

Let $X_{1}, X_{2},\ldots X_{n}$ be a random sample from (\ref{RLPDF}) and $X_{1:n}\leq X_{2:n}\leq \ldots \leq X_{n:n}$  denote
the the corresponding order statistics. It is well known that the PDF and the CDF of  the of $r$-th  order statistic say $X_{r:n}$; $1\leq r \leq n$, respectively,  are given by
\begin{equation}\label{orderapdf}
\begin{aligned}
f_{r:n}(x)&=\frac{n!}{(r-1)!(n-r)!}[F(x)]^{r-1}[1-F(x)]^{n-r}f(x) \\
&=\frac{n!}{(r-1)!(n-r)!}\sum_{u=0}^{n-r}(-1)^{u}\left(\begin{array}{c}n-r\\u\end{array}\right)[F(x)]^{r-1+u}f(x)
\end{aligned}
\end{equation}
and
\begin{equation}\label{orderacdf}
\begin{aligned}
F_{r:n}(x)=\sum_{l=k}^{n}\left(\begin{array}{c}n\\l\end{array}\right)[F(x)]^{l}[1-F(x)]^{n-l}=\sum_{l=k}^{n}\sum_{u=0}^{n-r}(-1)^{u}\left(\begin{array}{c}n\\l\end{array}\right)\left(\begin{array}{c}n-r\\u\end{array}\right)[F(x)]^{l+u},
\end{aligned}
\end{equation}
for $k=1, 2, \ldots, n$. It follows, from (\ref{orderapdf}) and (\ref{orderacdf}), that the PDF and CDF of the $r$-th order statistic of the GRL reduce to

\begin{equation*}
\begin{aligned}
f_{r:n}(x)=&\frac{n!}{(\lambda-1)(r-1)!(n-r)!}\left(\lambda+\frac{t}{\lambda}-2\right)e^{-\frac{t}{\lambda}}\sum_{u=0}^{n-r}(-1)^{u}\left(\begin{array}{c}n-r\\u\end{array}\right)\times \\ & \left[1-\left(\frac{1}{\lambda-1}\right)
\left(\lambda-1+\frac{t^\alpha}{\lambda} \right)e^{- \frac{t^\alpha}{\lambda}} \right]^{r-1+u}
\end{aligned}
\end{equation*}
and
$$F_{r:n}(x)=\sum_{l=k}^{n}\sum_{u=0}^{n-r}(-1)^{u}\left(\begin{array}{c}n\\l\end{array}\right)\left(\begin{array}{c}n-r\\u\end{array}\right)\left[1-\left(\frac{1}{\lambda-1}\right)
\left(\lambda-1+\frac{t^\alpha}{\lambda} \right)e^{- \frac{t^\alpha}{\lambda}}\right]^{l+u}.$$

\section{Inference}
\label{sec3}
In this section, we estimate of the GRL parameters $\lambda$ and $\alpha$ using eight frequentist approaches. These methods are: the weighted least-squares (WLSE), ordinary least squares (OLSE), maximum likelihood (MLE),  maximum product of spacing (MPSE), Cram\'{e}r--von Mises (CVME), Anderson--Darling (ADE), Right-tail Anderson--Darling (RADE) and percentile based (PCE) estimators.

\subsection{Maximum likelihood estimation}
In this sub-section we present the maximum likelihood estimator (MLE) of the parameters $\lambda$ and $\alpha$ of the GRL distribution.

 Let $T_1,\ldots,T_n$ be a random sample such that $T$ has PDF given in (\ref{RLPDF}). In this case, for $\pmb {\phi}=(\lambda,\alpha)^{\intercal}$, the likelihood function from (\ref{RLPDF}) is given by
\begin{equation}\label{veroiNM}
L(\pmb {\phi};\pmb {t})=\frac{\alpha^{n}}{{\lambda^{n+1}(\lambda-1)}^n}\prod_{i=1}^n{t_i^{\alpha-1}}\prod_{i=1}^n\left(\lambda^2+t_i^\alpha-2\lambda\right)\exp\left(-\frac{1}{\lambda}\sum_{i=1}^n t_i^\alpha\right).
\end{equation}

The log-likelihood function $l(\pmb {\phi};\pmb{t})=\log{L(\pmb {\phi};\pmb{t})}$ is given by
\begin{equation}\label{verogNM2}
\begin{aligned}
l(\pmb {\phi};\pmb {t})=&\ n\log\left(\alpha\right)-(n+1)\log(\lambda)-n\log(\lambda-1)-\frac{1}{\lambda}\sum_{i=1}^n t_i^\alpha +(\alpha-1)\sum_{i=1}^{n}\log(t_i) \\ &+ \sum_{i=1}^n\log\left(\lambda^2+t_i^\alpha-2\lambda\right).
\end{aligned}
\end{equation}

From the expressions $\frac{\partial}{\partial \lambda}l(\pmb {\phi};\pmb {t})=0$, $\frac{\partial}{\partial \alpha}l(\pmb {\phi};\pmb {t})=0$, we get the likelihood equations
\begin{equation*}\label{verogg21}
-\frac{n+1}{\lambda}-\frac{n}{\lambda-1}+\frac{1}{\lambda^2}\sum_{i=1}^n t_i^\alpha + \sum_{i=1}^n\frac{2(\lambda-1)}{\lambda^2+t_i^\alpha-2\lambda}=0
\end{equation*}
and
\begin{equation*}\label{verogg22}
\frac{n}{\alpha}-\frac{1}{\lambda}\sum_{i=1}^n t_i^\alpha\log(t_i)+\sum_{i=1}^{n}\log(t_i)+ \sum_{i=1}^n\frac{t_i^\alpha\log(t_i)}{\lambda^2+t_i^\alpha-2\lambda}=0 .
\end{equation*}

Under mild conditions (Migon,  2014) the ML estimates are asymptotically normal distributed with a bivariate normal distribution given by
\begin{equation*}
(\hat\lambda,\hat\alpha) \sim N_2[(\lambda,\alpha),H^{-1}(\lambda,\alpha)] \mbox{ for } n \to \infty , \end{equation*}
where the elements of the observed Fisher information matrix $H(\lambda,\alpha)$ are given by
\begin{equation*}
h_{11}(\lambda,\alpha)=-\frac{n+1}{\lambda^2}-\frac{n}{(\lambda-1)^2}+\frac{2}{\lambda^3}\sum_{i=1}^n t_i^\alpha - \sum_{i=1}^n\frac{2(t_i^\alpha-\lambda^2+2\lambda-2)}{\left(\lambda^2+t_i^\alpha-2\lambda\right)^2},
\end{equation*}
\begin{equation*}
h_{12}(\lambda,\alpha)=h_{21}(\alpha,\lambda)=-\frac{1}{\lambda^2}\sum_{i=1}^n t_i^\alpha\log(t_i)+\sum_{i=1}^n\frac{2(\lambda-1)t_i^\alpha\log(t_i)}{\left(\lambda^2+t_i^\alpha-2\lambda\right)^2},
\end{equation*}
\begin{equation*}
h_{22}(\lambda,\alpha)=+\frac{n}{\alpha^2}+\frac{1}{\lambda}\sum_{i=1}^n t_i^\alpha\log(t_i)^2 -\sum_{i=1}^n\frac{\lambda(\lambda-2)t_i^\alpha\log(t_i)^2}{\left(\lambda^2+t_i^\alpha-2\lambda\right)^2}.
\end{equation*}

This can also be done by using different
programs namely \texttt{R} (\texttt{optim} function), \texttt{SAS} (\texttt{PROC NLMIXED}) or by solving the nonlinear likelihood equations obtained by
differentiating $\ell $.

\subsection{Ordinary and weighted least-square estimators}

Let $x_{(1)},x_{(2)},\cdots ,x_{(n)}$ be the order statistics of the random
sample of size $n$ from $F\left( \mathbf{x};\lambda,\alpha \right) $ in (\ref{cdf}). The
ordinary least square estimators (OLSEs) (Swain et al., 1988).
$\widehat{\lambda }_{LSE}$
 and $\widehat{\alpha }_{LSE}$ can be obtained by
minimizing
\begin{equation*}
V\left(\lambda,\alpha \right) =\sum_{i=1}^{n}\left[ F\left(
x_{(i)}|\lambda,\alpha \right) -\frac{i}{n+1}\right] ^{2},
\end{equation*}%
with respect to $\lambda$ and $\alpha$. Or equivalently, the OLSEs follow
by solving the non-linear equations%
\begin{equation*}
\sum_{i=1}^{n}\left[ F\left( x_{(i)}|\lambda,\alpha \right) -\frac{i}{n+1}%
\right] \Delta _{s}\left( x_{(i)}|\lambda,\alpha \right) =0,~\ s=1,2,
\end{equation*}%
where
\begin{eqnarray}
\Delta _{1}\left( x_{(i)}|\lambda,\alpha \right) &=&\frac{\partial }{%
\partial \lambda }F\left( x_{(i)}|\lambda,\alpha \right) \text{ and } ~\Delta
_{2}\left( x_{(i)}|\lambda,\alpha \right) =\frac{\partial }{\partial
\alpha }F\left( x_{(i)}|\lambda,\alpha \right).  \label{LSEs}
\end{eqnarray}%
Note that the solution of $\Delta _{s}$ for $s=1,2$ can be obtained
numerically.

The weighted least-squares estimators (WLSEs) (Swain et al., 1988), $%
\widehat{\lambda }_{WLSE}$ and $\widehat{\alpha }_{WLSE}$, can be obtained by minimizing the following equation
\begin{equation*}
W\left(\lambda,\alpha \right) =\sum_{i=1}^{n}\frac{\left( n+1\right)
^{2}\left( n+2\right) }{i\left( n-i+1\right) }\left[ F\left( x_{(i)}|\lambda,\alpha \right) -\frac{i}{n+1}\right] ^{2}.
\end{equation*}%
Further, the WLSEs can also be derived by solving the non-linear equations
\begin{equation*}
\sum_{i=1}^{n}\frac{\left( n+1\right) ^{2}\left( n+2\right) }{i\left(
n-i+1\right) }\left[ F\left( x_{(i)}|\lambda,\alpha \right) -\frac{i}{n+1}%
\right] \Delta _{s}\left( x_{(i)}|\lambda,\alpha \right) =0,~\ s=1,2,
\end{equation*}%
where $\Delta _{1}\left( \cdot |\lambda,\alpha \right) $ and $\Delta
_{2}\left( \cdot |\lambda,\alpha \right)$ are provided in (\ref{LSEs}).

\subsection{Method of maximum product of spacing}

The maximum product of spacings (MPS) method (Cheng and Amin, 1979 and Cheng and Amin, 1983 and Ranneby, 1984), as an approximation to the Kullback-Leibler information
measure, is a good alternative to the MLE method.

Let $D_{i}(\lambda,\alpha)=F\left( x_{(i)}|\lambda,\alpha \right)
-F\left( x_{(i-1)}|\lambda,\alpha \right) $, for $i=1,2,\ldots ,n+1,$ be
the uniform spacing of a random sample from the {GRL} distribution, where
$F\left( x_{(0)}|\lambda,\alpha \right) =0$, $F\left( x_{(n+1)}|\lambda,\alpha \right) =1$ and \ $\sum_{i=1}^{n+1}D_{i}(\lambda,\alpha)=1$.
The maximum product of spacing estimators (MPSEs) for $\widehat{\lambda }%
_{MPSE}$ and  $\widehat{\alpha }_{MPSE}$ can be
obtained by maximizing the geometric mean of the spacing
\begin{equation*}
G\left(\lambda,\alpha \right) =\left[ \prod\limits_{i=1}^{n+1}D_{i}(%
\lambda,\alpha)\right] ^{\frac{1}{n+1}}
\end{equation*}%
with respect to $\lambda$ and $\alpha$, or, equivalently, by maximizing
the logarithm of the geometric mean of sample spacings
\begin{equation*}
H\left(\lambda,\alpha \right) =\frac{1}{n+1}\sum_{i=1}^{n+1}\log
D_{i}(\lambda,\alpha).
\end{equation*}%
The MPSEs of the GRL parameters can be obtained by solving the nonlinear
equations defined by%
\begin{equation*}
\frac{1}{n+1}\sum\limits_{i=1}^{n+1}\frac{1}{D_{i}(\lambda,\alpha)}\left[
\Delta _{s}(x_{(i)}|\lambda,\alpha)-\Delta _{s}(x_{(i-1)}|\lambda,\alpha)\right] =0,\quad s=1,2,
\end{equation*}%
where $\Delta _{1}\left( \cdot |\lambda,\alpha \right) $ and $\Delta
_{2}\left( \cdot |\lambda,\alpha \right) $ are defined in (\ref{LSEs}).

\subsection{The Cram\'{e}r--von Mises minimum distance estimators}
The Cram\'{e}r--von Mises estimators (CVMEs) as a type of minimum distance estimators have less bias than the other minimum distance estimators (Macdonald, 1971).
The CVMEs are obtained based on the difference between the estimates of the CDF and the empirical distribution function (Luceño, 2006).
The CVMEs of the GRL parameters are obtained by minimizing
\begin{equation*}
C(\lambda,\alpha)=\frac{1}{12n}+\sum_{i=1}^{n}\left[ F\left(
x_{(i)}|\lambda,\alpha \right) -{\frac{2i-1}{2n}}\right] ^{2},
\end{equation*}%
with respect to $\lambda$ and $\alpha$. Further, the CVMEs follow by
solving the non-linear equations
\begin{equation*}
\sum_{i=1}^{n}\left[ F\left( x_{(i)}|\lambda,\alpha \right) -{\frac{2i-1}{%
2n}}\right] \Delta _{s}\left( x_{(i)}|\lambda,\alpha \right) =0,\quad
s=1,2,
\end{equation*}%
where $\Delta _{1}\left( \cdot |\lambda,\alpha \right) $ and $\Delta
_{2}\left( \cdot |\lambda,\alpha \right)$ are provided in (\ref{LSEs}).

\subsection{The Anderson-Darling and right-tail Anderson-Darling estimators}

The Anderson-Darling statistic or Anderson-Darling estimator is another type of minimum distance estimators. The Anderson-Darling estimators (ADEs) of the GRL parameters are obtained by minimizing
\begin{equation*}
A(\lambda,\alpha)=-n-\frac{1}{n}\sum_{i=1}^{n}\left( 2i-1\right) \,\left[
\log F\left( x_{(i)}|\lambda,\alpha \right) +\log S\left( x_{(i)}|\lambda,\alpha \right) \right] ,
\end{equation*}%
with respect to $\lambda$ and $\alpha$. These ADEs can also be obtained
by solving the non-linear equations
\begin{equation*}
\sum_{i=1}^{n}\left( 2i-1\right) \left[ \frac{\Delta _{s}\left(
x_{(i)}|\lambda,\alpha \right) }{F\left( x_{(i)}|\lambda,\alpha \right) }%
-\frac{\Delta _{j}\left( x_{(n+1-i)}|\lambda,\alpha \right) }{S\left(
x_{(n+1-i)}|\lambda,\alpha \right) }\right] =0,\ \ s=1,2.
\end{equation*}%
The right-tail Anderson-Darling estimators (RADEs) of the GRL parameters
are obtained by minimizing
\begin{equation*}
R(\lambda,\alpha)=\frac{n}{2}-2\sum_{i=1}^{n}F\left( x_{i:n}|\lambda,\alpha \right) -\frac{1}{n}\sum_{i=1}^{n}\left( 2i-1\right) \log S\left(
x_{n+1-i:n}|\lambda,\alpha \right) ,
\end{equation*}%
with respect to $\lambda$ and $\alpha$. The RADEs can also be obtained by
solving the non-linear equations
\begin{equation*}
-2\sum_{i=1}^{n}\Delta _{s}\left( x_{i:n}|\lambda,\alpha \right) +\frac{1}{%
n}\sum_{i=1}^{n}\left( 2i-1\right) \frac{\Delta _{s}\left(
x_{_{n+1-i:n}}|\lambda,\alpha \right) }{S\left( x_{n+1-i:n}|\lambda,\alpha \right) }=0,\ \ s=1,2.
\end{equation*}%
where $\Delta _{1}\left( \cdot |\lambda,\alpha \right) $ and $\Delta
_{2}\left( \cdot |\lambda,\alpha \right)$ are defined in Equation (\ref{LSEs}).

\subsection{Method of percentile estimation}

This method was originally suggested by  Kao (1958, 1959).  Let $ u_{i}=i/\left( n+1\right) $ be an unbiased estimator of $F\left(x_{(i)}|\lambda,\alpha \right) $. Then, the PCE of the parameters of GRL distribution are
obtained by minimizing the following function%
\begin{equation*}
P(\lambda,\alpha)=\sum_{i=1}^{n}\left( x_{(i)}-\left[-\lambda\left(W_{-1} \left((\lambda -1) (u_{i}-1) e^{1-\lambda }\right)+\lambda-1 \right) \right]^{1/\alpha } \right) ^{2},
\end{equation*}%
with respect to $\lambda$ and $\alpha$, where $W_{-1}(x)$ is the negative branch of the Lambert function.

\section{Simulation analysis}
\label{sec4}

A simulation study is conducted to evaluate and compare the behavior of the estimates with respect to their: average of absolute value of biases ($|Bias(\widehat{\pmb \phi})|$),
$|Bias(\widehat{\pmb \phi})|=$ $\frac{1}{N}\sum_{i=1}^{N}|\widehat{\pmb \phi}-\pmb \phi|$,  the average of mean square errors (MSEs),
 $MSEs=\frac{1}{N}\sum_{i=1}^{N}(\widehat{\pmb \phi}-\pmb \phi)^2$, and average of mean relative errors (MREs),
$MREs=\frac{1}{N}\sum_{i=1}^{N}|\widehat{\pmb \phi}-\pmb \phi|/\pmb \phi$.\\ We generate $N=5000$ random samples: ${X_1,X_2,\ldots,X_N}$ of sizes $n=30, 50, 80,100$ and $200$ from GRL model by using equation (\ref{qf}) by choosing $\lambda=\{2.0,4.5\}$ and $\alpha=\{0.5,2.5,0.7,3.5\}$, we used R software (version 3.6.1) (R Core Team, 2019)
For each parameters combination and each sample, we estimate of the GRL parameters $\lambda$ and $\alpha$ using eight frequentist approaches: WLSE (weighted least-squares), OLSE (ordinary least squares), MLE (maximum likelihood), MPSE (maximum product of spacing), CVME (Cramer-von Mises), ADE (Anderson-Darling), RADE (right-tail Anderson-Darling) and PCE (percentile based). Then, the MSEs and MREs of the parameter estimates are computed. Simulated outcomes are listed in Tables~\ref{tab:tab2}-\ref{tab:tab9} (see Appendix A). Furthermore, these tables show the rank of each of the estimators among all the estimators in each row, which is the superscript indicators, and the $\sum Ranks$, which is the partial sum of the ranks for each column in a certain sample size.
Table~\ref{tab:tab11} shows the partial and overall rank of the estimators.

From tables~\ref{tab:tab2}-\ref{tab:tab9}, we observe that:

\begin{itemize}
 \item All Estimation methods show the property of consistency i.e., the MSEs and MREs decrease as sample size increase, for all parameter combinations, except the estimator method WLSE.
\\
\item WLSE Estimation method shows the property of consistency for all parameter combinations, except the combinations $\phi=(\lambda=3.1,\alpha=0.7)^T$ and $\phi=(\lambda=3.1,\alpha=3.5)^T$, for the parameter $\lambda$.
\end{itemize}

Form Table~\ref{tab:tab10}, and for the parameter combinations, we can conclude that the MPSE estimator method outperforms all the other estimator methods (overall score of 62). Therefore, depends on our study,
we can confirm the superiority of MPSE and ADE estimator methods for the GRL distribution.

\setcounter{table}{9} \renewcommand{\thetable}{\arabic{table}}

\begin{table}[tbp]
\centering
\caption{Partial and overall ranks of all the methods of estimation for various combination of $\pmb \phi$}
 \label{tab:tab10}
\resizebox{\textwidth}{!}{
\begin{tabular}{cccccccccc}
\hline \hline \\
$\pmb \phi^{\intercal}$  & \textbf{$n$} & \textbf{WLSE} & \textbf{OLSE} & \textbf{MLE} & \textbf{MPSE} & \textbf{CVME} & \textbf{ADE} & \textbf{RADE} & \textbf{PCE} \\ \\ \hline \\
                           & 30                            & 5.5                            & 7.5                            & 2                             & 3                              & 5.5                            & 1                             & 4                              & 7.5                           \\
                           & 50                            & 4                              & 8                              & 2                             & 3                              & 5                              & 1                             & 6                              & 7                             \\
$(\lambda=2.0,\alpha=0.5)$ & 80                            & 4                              & 7                              & 1                             & 2                              & 5                              & 3                             & 6                              & 8                             \\
                           & 120                           & 3                              & 6.5                            & 2                             & 1                              & 6.5                            & 4                             & 5                              & 8                             \\
                           & 200                           & 4                              & 7                              & 1.5                           & 1.5                            & 5                              & 3                             & 6                              & 8                             \\
                           &                               &                                &                                &                               &                                &                                &                               &                                &                               \\
                           & 30                            & 5.5                            & 8                              & 3                             & 4                              & 7                              & 1                             & 5.5                            & 2                             \\
                           & 50                            & 5                              & 8                              & 3.5                           & 2                              & 6                              & 1                             & 7                              & 3.5                           \\
$(\lambda=2.0,\alpha=2.5)$ & 80                            & 5                              & 8                              & 2.5                           & 1                              & 6                              & 4                             & 7                              & 2.5                           \\
                           & 120                           & 3                              & 8                              & 2                             & 1                              & 6                              & 4.5                           & 7                              & 4.5                           \\
                           & 200                           & 4                              & 7.5                            & 2                             & 1                              & 7.5                            & 5                             & 6                              & 3                             \\
                           &                               &                                &                                &                               &                                &                                &                               &                                &                               \\
                           & 30                            & 3                              & 8                              & 5                             & 2                              & 6                              & 1                             & 4                              & 7                             \\
                           & 50                            & 1                              & 8                              & 4                             & 2                              & 5.5                            & 3                             & 5.5                            & 7                             \\
$(\lambda=2.0,\alpha=0.7)$ & 80                            & 2                              & 8                              & 3                             & 1                              & 6                              & 4                             & 5                              & 7                             \\
                           & 120                           & 3                              & 7.5                            & 1                             & 2                              & 7.5                            & 4                             & 5                              & 6                             \\
                           & 200                           & 3                              & 8                              & 2                             & 1                              & 6                              & 4                             & 5                              & 7                             \\
                           &                               &                                &                                &                               &                                &                                &                               &                                &                               \\
                           & 30                            & 2                              & 8                              & 4                             & 5                              & 6                              & 1                             & 7                              & 3                             \\
                           & 50                            & 1                              & 8                              & 5                             & 4                              & 6.5                            & 2.5                           & 6.5                            & 2.5                           \\
$(\lambda=2.0,\alpha=3.5)$ & 80                            & 1                              & 8                              & 3.5                           & 2                              & 6.5                            & 5                             & 6.5                            & 3.5                           \\
                           & 120                           & 2.5                            & 8                              & 1                             & 2.5                            & 6                              & 4                             & 7                              & 5                             \\
                           & 200                           & 3.5                            & 7.5                            & 2                             & 1                              & 7.5                            & 5                             & 6                              & 3.5                           \\
                           &                               &                                &                                &                               &                                &                                &                               &                                &                               \\
                           & 30                            & 6                              & 3.5                            & 7                             & 1                              & 5                              & 2                             & 3.5                            & 8                             \\
                           & 50                            & 5                              & 3                              & 7                             & 1                              & 6                              & 2                             & 4                              & 8                             \\
$(\lambda=3.1,\alpha=0.5)$ & 80                            & 5.5                            & 3.5                            & 7                             & 1                              & 5.5                            & 2                             & 3.5                            & 8                             \\
                           & 120                           & 5                              & 3                              & 7                             & 1                              & 6                              & 2                             & 4                              & 8                             \\
                           & 200                           & 5                              & 3                              & 7                             & 1                              & 5                              & 2                             & 5                              & 8                             \\
                           &                               &                                &                                &                               &                                &                                &                               &                                &                               
 \\ \hline
\end{tabular}}
\end{table}

\begin{table}[tbp]
\centering
\caption*{Table 10: Partial and overall ranks of all the methods of estimation for various combination of $\pmb \phi$ (Continued)}
\resizebox{\textwidth}{!}{
\begin{tabular}{cccccccccc}
\hline \hline \\
$\pmb \phi^{\intercal}$  & \textbf{$n$} & \textbf{WLSE} & \textbf{OLSE} & \textbf{MLE} & \textbf{MPSE} & \textbf{CVME} & \textbf{ADE} & \textbf{RADE} & \textbf{PCE} \\ \\ \hline \\                           
                           & 30                            & 2                              & 5                              & 8                             & 1                              & 6                              & 3                             & 7                              & 4                             \\
                           & 50                            & 3                              & 4                              & 8                             & 1                              & 6                              & 5                             & 7                              & 2                             \\
$(\lambda=3.1,\alpha=2.5)$ & 80                            & 4                              & 3                              & 8                             & 1                              & 6                              & 5                             & 7                              & 2                             \\
                           & 120                           & 6                              & 3                              & 8                             & 1                              & 4.5                            & 2                             & 7                              & 4.5                           \\
                           & 200                           & 5.5                            & 2.5                            & 8                             & 1                              & 5.5                            & 2.5                           & 7                              & 4                             \\
                           &                               &                                &                                &                               &                                &                                &                               &                                &                               \\
                           & 30                            & 3.5                            & 2                              & 8                             & 1                              & 5                              & 6                             & 3.5                            & 7                             \\
                           & 50                            & 4                              & 3                              & 8                             & 1                              & 5                              & 6                             & 2                              & 7                             \\
$(\lambda=3.1,\alpha=0.7)$ & 80                            & 4                              & 3                              & 7                             & 1                              & 5                              & 6                             & 2                              & 8                             \\
                           & 120                           & 4                              & 3                              & 7                             & 1                              & 5                              & 6                             & 2                              & 8                             \\
                           & 200                           & 5                              & 2                              & 7                             & 1                              & 6                              & 4                             & 3                              & 8                             \\
                           &                               &                                &                                &                               &                                &                                &                               &                                &                               \\
                           & 30                            & 2                              & 6                              & 7                             & 1                              & 8                              & 4                             & 5                              & 3                             \\
                           & 50                            & 2                              & 5                              & 8                             & 1                              & 7                              & 4                             & 6                              & 3                             \\
$(\lambda=3.1,\alpha=3.5)$ & 80                            & 2                              & 4.5                            & 8                             & 1                              & 6                              & 4.5                           & 7                              & 3                             \\
                           & 120                           & 2                              & 3                              & 8                             & 1                              & 6                              & 4                             & 7                              & 5                             \\
                           & 200                           & 2                              & 3                              & 8                             & 1                              & 5                              & 4                             & 7                              & 6                             \\
                           &                               &                                &                                &                               &                                &                                &                               &                                &                               \\
$\sum$ Ranks               &                               & 142.5                          & 222.5                          & 203                           & 62                             & 236.5                          & 137                           & 216.5                          & 220                           \\ \\
Overall Rank               &                               & 3                              & 7                              & 4                             & 1                              & 8                              & 2                             & 5                              & 6
\\ \\ \hline \hline
\end{tabular}}
\end{table}

\section{Data analysis}
\label{sec5}

In this section, we illustrate the importance of the GRL distribution in modelling skewed data using two real data sets from the medicine and geology fields. The first data represent the survival times, in weeks, of 33
patients suffering from acute myelogeneous leukaemia  (Feigl and Zelen, 1965). These data have been analyzed by Abouelmagd et al. (2018), Nassar et al. (2018) and Sen et al. (2019).
The second data is used to evaluate the risks associated with earthquakes occurring close to the central site of a nuclear power plant. This data set refers to the distances, in miles, to the nuclear power plant of the most recent 8 earthquakes of intensity larger than a given value (Castillo,  2012) and it consists of 60 observations.

We consider some measures of goodness-of-fit namely, minus maximized log-likelihood ($-\widehat{\ell }$), Cram\'{e}r-Von Mises ($W^{\ast }$), Anderson-Darling ($A^{\ast }$) and Kolmogorov Smirnov (KS) statistics with
its bootstrapped $p$-value (PV), to compare the fits of the GRL distribution with other
competitive models given in Tables~\ref{tab:tab12} and~\ref{tab:tab13}.  We draw $999999$ bootstrap samples to obtain the KS bootstrapped PV.

\bigskip

The fitted competitive models are namely given in~\ref{tab:tab11}, and their densities (for $x>0$) are given by:

MOEx: $f(x)=\alpha \lambda \mathrm{\exp }(-\lambda x)\left[ 1-\left(
1-\alpha \right) \mathrm{\exp }(-\lambda x)\right] ^{-2}.$

BEx: $f(x)=\frac{\lambda }{B(a,b)}\mathrm{\exp }(-b\lambda x)\left[ 1-%
\mathrm{\exp }(-\lambda x)\right] ^{a-1}.$

EEx: $f(x)=\alpha \lambda \mathrm{\exp }(-\lambda x)\left[ 1-\mathrm{\exp }%
(-\lambda x)\right] ^{\alpha -1}.$

Ga: $f(x)=\frac{b^{-a}}{\Gamma \left( a\right) }x^{a-1}\exp \left(
-x/b\right) .$

GLi: $f(x)=\frac{\alpha \lambda ^{2}}{\lambda +1}\left( 1+x\right) \mathrm{%
\exp }(-\lambda x)\left[ 1-\frac{1+\lambda +\lambda x}{\lambda +1}\mathrm{%
\exp }(-\lambda x)\right] ^{\alpha -1}.$

TTLi: $f\left( x\right) =\left(\frac{a^{2}}{\alpha +a}(1+\alpha x)\exp (-ax)\right)
\left(1+\lambda-2 \lambda \left(
1-\frac{\alpha +a+\alpha ax}{\alpha +a}\exp (-ax)\right) \right).$

PLx: $f\left( x\right) =\alpha \beta \lambda \left( 1+\beta x\right)
^{-\alpha -1}\mathrm{\exp }\left[ -\lambda \left( 1+\beta x\right) ^{-\alpha
}\right] \left[ 1-\mathrm{\exp }(-\lambda )\right] ^{-1}.$

LiGc: $f\left( x\right) =\left[ 1-\left( 1+\frac{ax}{a+1}\right) \exp (-ax)%
\right] /\left[ 1-\alpha \left( 1+\frac{ax}{a+1}\right) \exp (-ax)\right] .$

Li: $f\left( x\right) =\frac{\lambda ^{2}}{\lambda +1}\left( 1+x\right)
\mathrm{\exp }(-\lambda x).$

The parameters of the above densities are all positive real numbers except $%
\left\vert \lambda \right\vert \leq 1$ for the TTLi distribution and $\alpha
\in \left( 0,1\right) $ for the LiGc distribution.

\begin{table}[tbp]
\centering
\caption{The fitted competitive models}
 \label{tab:tab11}
\resizebox{\textwidth}{!}{
\begin{tabular}{ll}
\hline \hline \\
Distribution & Author(s) \\ \\ \hline \\
Ramos-Louzada (RL) (Special case) & Ramos and Louzada (2019) \\
Marshall-Olkin exponential (MOEx) & Marshall and Olkin (1997) \\
Beta exponential (BEx) & Jones (2004) \\
Exponentiated exponential (EEx) & Gupta and Kundu (2001) \\
Gamma (Ga) & -- \\
Generalized Lindley (GLi) & Nadarajah et al. (2011) \\
Transmuted two-parameter Lindley (TTLi) & Kemaloglu and Yilmaz (2017) \\
Poisson-Lomax (PLx) & Al-Zahrani and Sagor (2014) \\
Lindley geometric (LiGc) & Zakerzadeh and Mahmoudi (2012) \\
Lindley (Li) & Lindley (1958) \\ \\ \hline\hline
\end{tabular}}
\end{table}

\begin{table}[tbp]
\centering
\caption{Goodness-of-fit statistics, MLEs and SEs for leukaemia data}
\label{tab:tab12}
\resizebox{\textwidth}{!}{
\begin{tabular}{lccccccl}
\hline \hline \\
Model & $-\widehat{\ell }$ & $W^{\ast}$ & $A^{\ast}$ & K-S     & Bootstrapped PV &                      & Estimates (SE)     \\ \\ \hline \\
GRL   & 153.58031          & 0.09469    & 0.65053    & 0.13637 & 0.15573         & $\widehat{\lambda }$ & 14.6996 (7.67698)   \\
      &                    &            &            &         &                 & $\widehat{\alpha }$  & 0.77410 (0.10927)   \\
MOEx  & 153.59511          & 0.09725    & 0.65062    & 0.14919 & 0.13112         & $\widehat{\alpha }$  & 0.30374 (0.21450)   \\
      &                    &            &            &         &                 & $\widehat{\lambda }$ & 0.01344 (0.00677)   \\
BEx   & 153.65124          & 0.09676    & 0.67015    & 0.13830 & 0.23219         & $\widehat{a}$        & 0.67358 (0.15375)   \\
      &                    &            &            &         &                 & $\widehat{b}$        & 0.79022 (1.59942)   \\
      &                    &            &            &         &                 & $\widehat{\lambda }$ & 0.02421 (0.05263)   \\
EEx   & 153.65164          & 0.09664    & 0.66905    & 0.13834 & 0.19718         & $\widehat{\alpha }$  & 0.67805 (0.14476)   \\
      &                    &            &            &         &                 & $\widehat{\lambda }$ & 0.01880 (0.00476)   \\
Ga    & 153.67366          & 0.09662    & 0.66842    & 0.13901 & 0.19166         & $\widehat{a}$        & 0.68776 (0.14403)   \\
      &                    &            &            &         &                 & $\widehat{b}$        & 59.4374 (17.6235) \\
GLi   & 154.70797          & 0.11038    & 0.76662    & 0.14254 & 0.30774         & $\widehat{\alpha }$  & 0.36486 (0.07752)   \\
      &                    &            &            &         &                 & $\widehat{\lambda }$ & 0.02603 (0.00606)   \\
TTLi  & 154.85217          & 0.09695    & 0.66683    & 0.20278 & 0.01875         & $\widehat{\alpha }$  & 0.00007 (0.00891)   \\
      &                    &            &            &         &                 & $\widehat{a}$        & 0.02071 (0.01043)   \\
      &                    &            &            &         &                 & $\widehat{\lambda }$ & 0.36761 (0.36088)   \\
PLx   & 155.92974          & 0.15942    & 0.95843    & 0.14182 & 0.10734         & $\widehat{\alpha }$  & 0.74788 (0.19910)   \\
      &                    &            &            &         &                 & $\widehat{\beta }$   & 2.38873 (8.77439)   \\
      &                    &            &            &         &                 & $\widehat{\lambda }$ & 9.70373 (19.3366)  \\
RL    & 155.45330          & 0.09726    & 0.67297    & 0.21831 & 0.00762         & $\widehat{\lambda }$ & 39.8689 (7.11473)  \\
LiGc  & 161.98422          & 0.13034    & 0.85743    & 0.24283 & 0.00001         & $\widehat{\alpha }$  & 0.91431 (0.07170)   \\
      &                    &            &            &         &                 & $\widehat{a}$        & 0.02303 (0.00888)   \\
Li    & 168.83368          & 0.11041    & 0.76655    & 0.32512 & 0.00001         & $\widehat{\lambda }$ & 0.04781 (0.00589)  \\ \\ \hline \hline
\end{tabular}}
\end{table}

\begin{table}[tbp]
\centering
\caption{Goodness-of-fit statistics, MLEs and SEs for epicenter data}
\label{tab:tab13}
\resizebox{\textwidth}{!}{
\begin{tabular}{lccccccl}
\hline \hline \\
Model & $-\widehat{\ell }$ & $W^{\ast}$ & $A^{\ast}$ & K-S     & Bootstrapped PV &                      & Estimates (SE)     \\ \\ \hline \\
GRL   & 323.74634          & 0.17905    & 1.04395    & 0.13180 & 0.12386         & $\widehat{\lambda }$ & 3728110.8 (8389) \\
      &                    &            &            &         &                 & $\widehat{\alpha }$  & 2.97246 (0.02463)          \\
Ga    & 325.52256          & 0.24286    & 1.40568    & 0.15528 & 0.00124         & $\widehat{a}$        & 6.18451 (1.10003)          \\
      &                    &            &            &         &                 & $\widehat{b}$        & 23.3718 (4.33053)         \\
BEx   & 326.12308          & 0.25892    & 1.50278    & 0.16002 & 0.00445         & $\widehat{a}$        & 7.13835 (1.56344)          \\
      &                    &            &            &         &                 & $\widehat{b}$        & 2.19531 (1.21066)          \\
      &                    &            &            &         &                 & $\widehat{\lambda }$ & 0.01133 (0.00433)          \\
GLi   & 326.54668          & 0.26913    & 1.56480    & 0.15616 & 0.00117         & $\widehat{\alpha }$  & 3.82547 (0.93381)          \\
      &                    &            &            &         &                 & $\widehat{\lambda }$ & 0.02387 (0.00247)          \\
EEx   & 326.98566          & 0.28122    & 1.63873    & 0.16018 & 0.00110         & $\widehat{\alpha }$  & 8.64049 (2.25238)          \\
      &                    &            &            &         &                 & $\widehat{\lambda }$ & 0.01913 (0.00216)          \\
MOEx  & 327.99000          & 0.22196    & 1.29842    & 0.13454 & 0.00156         & $\widehat{\alpha }$  & 30.3754 (12.0111)        \\
      &                    &            &            &         &                 & $\widehat{\lambda }$ & 0.02479 (0.00253)          \\
TTLi  & 329.75618          & 0.25369    & 1.47104    & 0.14685 & 0.01108         & $\widehat{\alpha }$  & 177.960 (3753.42)     \\
      &                    &            &            &         &                 & $\widehat{a}$        & 0.01876 (0.00518)          \\
      &                    &            &            &         &                 & $\widehat{\lambda }$ & -0.99999 (1.37433)         \\
PLx   & 331.17805          & 0.38592    & 2.27644    & 0.18476 & 0.00001         & $\widehat{\alpha }$  & 3.63997 (0.85273)          \\
      &                    &            &            &         &                 & $\widehat{\beta }$   & 0.01796 (0.01428)          \\
      &                    &            &            &         &                 & $\widehat{\lambda }$ & 53.4162 (59.5182)        \\
Li    & 340.54657          & 0.24249    & 1.40349    & 0.19602 & 0.00056         & $\widehat{\lambda }$ & 0.01374 (0.00125)      \\
LiGc  & 340.54657          & 0.24249    & 1.40349    & 0.19603 & 0.00009         & $\widehat{\alpha }$  & 0.00001 (0.41139)          \\
      &                    &            &            &         &                 & $\widehat{a}$        & 0.01374 (0.00289)          \\
RL    & 358.41349          & 0.24322    & 1.40786    & 0.33146 & 0.00000         & $\widehat{\lambda }$ & 143.524 (18.6612)       \\

  \\

 \hline \hline
\end{tabular}}
\end{table}

\begin{table}[tbp]
\centering
\caption{The parameter estimates under various estimation methods and the goodness-of-fit statistics
for leukaemia data}
 \label{tab:tab14}
\resizebox{\textwidth}{!}{
\begin{tabular}{lccccccc}
\hline \hline \\ \\
Method & $\hat \lambda$ & $\hat \alpha$ & $- \hat \ell$ & $W$     & $A$     & $K-S$   & Bootstrapped PV \\ \\ \hline \\
WLSE   & 10.92982       & 0.69340       & 153.92720     & 0.09412 & 0.64348 & 0.11093 & 0.36749         \\
OLSE   & 8.26873        & 0.62355       & 154.77563     & 0.09401 & 0.63966 & 0.09917 & 0.36766         \\
MLE    & 14.03083       & 0.76522       & 153.58430     & 0.09465 & 0.65005 & 0.13133 & 0.15634         \\
MPSE    & 11.97607 & 0.71768 & 153.75038 & 0.09424       &   0.64529 & 0.11903 & 0.55489         \\
CVME   & 9.09894        & 0.64955       & 154.37521     & 0.09403 & 0.64104 & 0.09309 & 0.32234         \\
ADE    & 10.34346       & 0.68310       & 153.99337     & 0.09411 & 0.64310 & 0.10464 & 0.38431         \\
RADE   & 10.39537       & 0.68317       & 154.00034     & 0.09411 & 0.64301 & 0.10549 & 0.43200         \\
PCE    & 24.31768       & 0.86231       & 154.07402     & 0.09492 & 0.65389 & 0.18744 & 0.08984
 \\  \\ \hline \hline
\end{tabular}}
\end{table}

\begin{table}[tbp]
\centering
\caption{The parameter estimates under various estimation methods and the goodness-of-fit statistics
for epicenter data}
 \label{tab:tab15}
\resizebox{\textwidth}{!}{
\begin{tabular}{lccccccc}
\hline \hline \\ \\
Method & $\hat \lambda$ & $\hat \alpha$ & $- \hat \ell$ & $W$     & $A$     & $K-S$   & Bootstrapped PV \\ \\ \hline \\
WLSE   & 1125196.57183  & 2.73252       & 324.154       & 0.18569 & 1.07845 & 0.11796 & 0.04276         \\
OLSE   & 267870.74460    & 2.44310       & 325.819       & 0.19434 & 1.12489 & 0.09761 & 0.32650         \\
MLE    & 3728110.76443  & 2.97253       & 323.746       & 0.17904 & 1.04392 & 0.13192 & 0.12335         \\
MPSE   & 1947372.47807  & 2.84345       & 323.858       & 0.18251 & 1.06180 & 0.12630 & 0.15794         \\
CVME   & 357125.58188   & 2.50034       & 325.407       & 0.19263 & 1.11557 & 0.10028 & 0.12011         \\
ADE    & 650623.70779   & 2.62361       & 324.591       & 0.18879 & 1.09491 & 0.11313 & 0.13961         \\
RADE   & 2842081.10850  & 2.90677       & 323.923       & 0.18149 & 1.05625 & 0.10840 & 0.19406         \\
PCE    & 2842081.10830  & 2.91651       & 323.778       & 0.18065 & 1.05215 & 0.12569 & 0.08414
 \\  \\ \hline \hline
\end{tabular}}
\end{table}

The numerical values of $-\widehat{\ell }$, $W^{\ast }$, $A^{\ast }$, KS, Bootstrapped PV, the MLEs and their corresponding standard errors (SEs) (given in parentheses) of the fitted models are listed in Tables~\ref{tab:tab12} and \ref{tab:tab13}, for both data sets, respectively. The figures in Tables~\ref{tab:tab12} and \ref{tab:tab13} show that the GRL distribution has the lowest values for all goodness-of-fit statistics among all fitted models. \\
Tables~\ref{tab:tab14} and ~\ref{tab:tab15} display the parameter estimates under various estimation methods and the goodness-of-fit statistics for both data sets, respectively. From Table~\ref{tab:tab14} and based on the $K-S$
bootstrapped PV, we recommend to use the MPSE method to estimate the parameters of the GRL distribution for leukaemia data, while the OLS method is recommended to estimate the GRL parameters for epicenter data, based on the $K-S$ bootstrapped PV in Table~\ref{tab:tab15}.

The histogram of the fitted GRL distribution and the other distributions are displayed in Figures~\ref{fig:fig3} and~\ref{fig:fig4} for the two data sets, respectively. Figures~\ref{fig:fig3} and~\ref{fig:fig4} show the plots of PDFs and CDFs of the fitted models for leukaemia and epicenter data. The HRF plot of the GRL distribution and the TTT plot of leukaemia data are displayed in Figure~\ref{fig:fig5}, whereas the HRF plot of the GRL distribution and the TTT plot of epicenter data are displayed in Figure~\ref{fig:fig6}. It is shown that, the HRF is decreasing for leukaemia data, whereas the HRF is increasing for epicenter data. Furthermore, the scaled TTT plot for the leukemia data is convex which indicates a decreasing HRF and it is concave for epicenter data which indicates an increasing HRF. Then, the GRL distribution is a suitable for modeling leukaemia and epicenter data.

\begin{figure}[H]
\begin{subfigure}{.5\textwidth}
  \centering
  \includegraphics[width=1\linewidth]{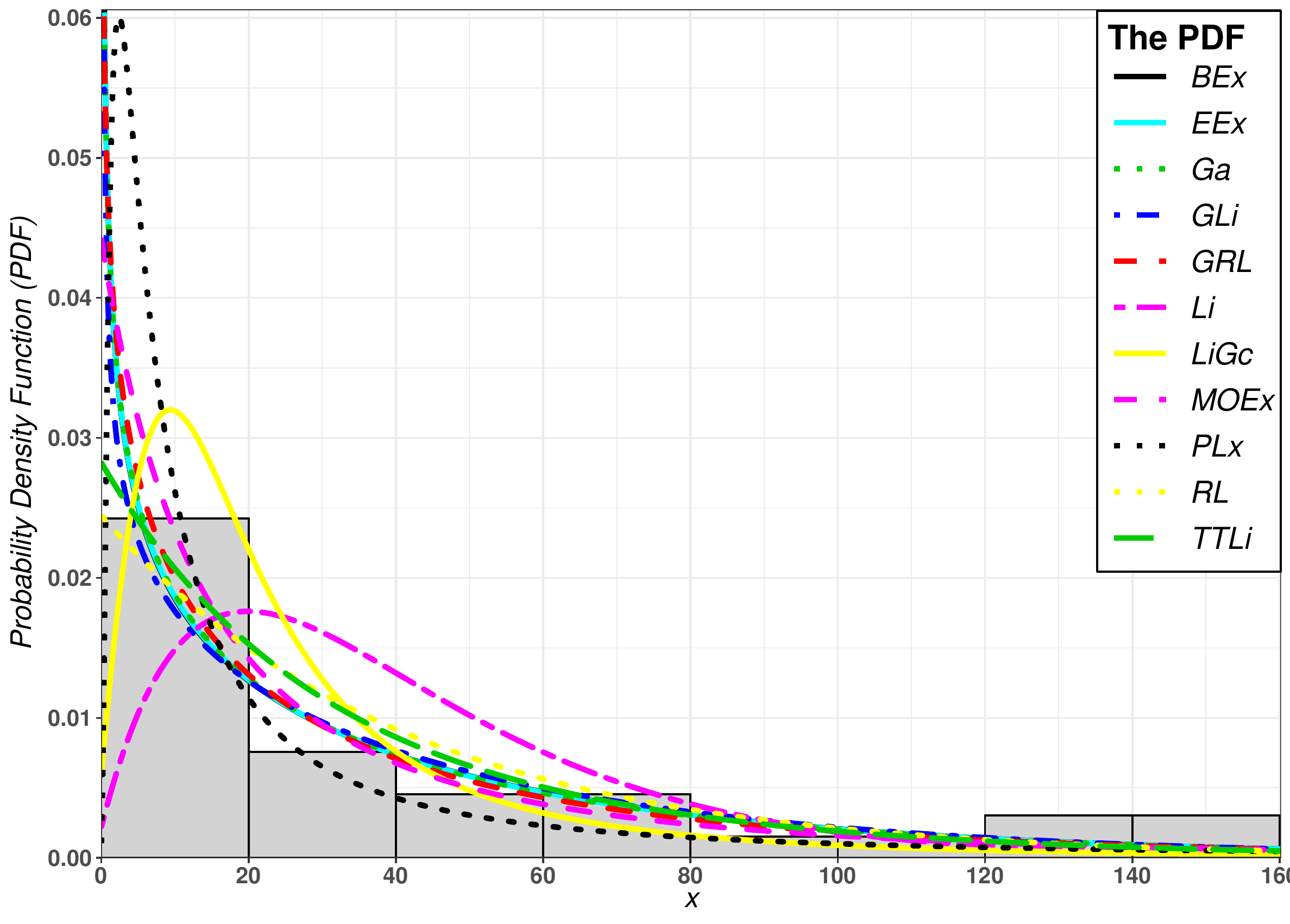}
  \caption{PDF plots}
\end{subfigure}
    \begin{subfigure}{.5\textwidth}
  \centering
  \includegraphics[width=1\linewidth]{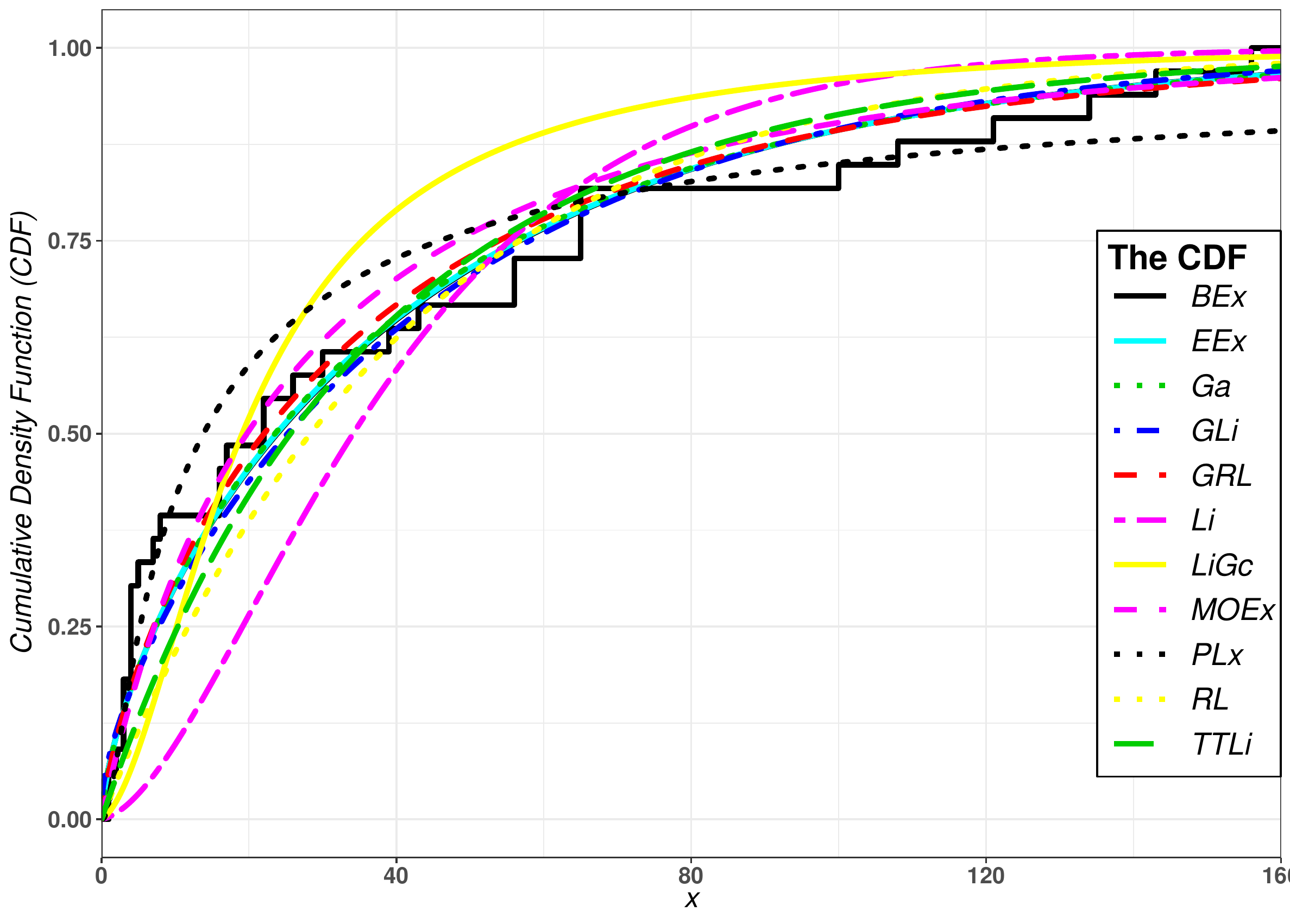}
  \caption{CDF plots}
\end{subfigure}
  \caption{PDFs and CDFs of the fitted models for leukaemia data}
  \label{fig:fig3}
\end{figure}

\begin{figure}[H]
\begin{subfigure}{.5\textwidth}
  \centering
  \includegraphics[width=1\linewidth]{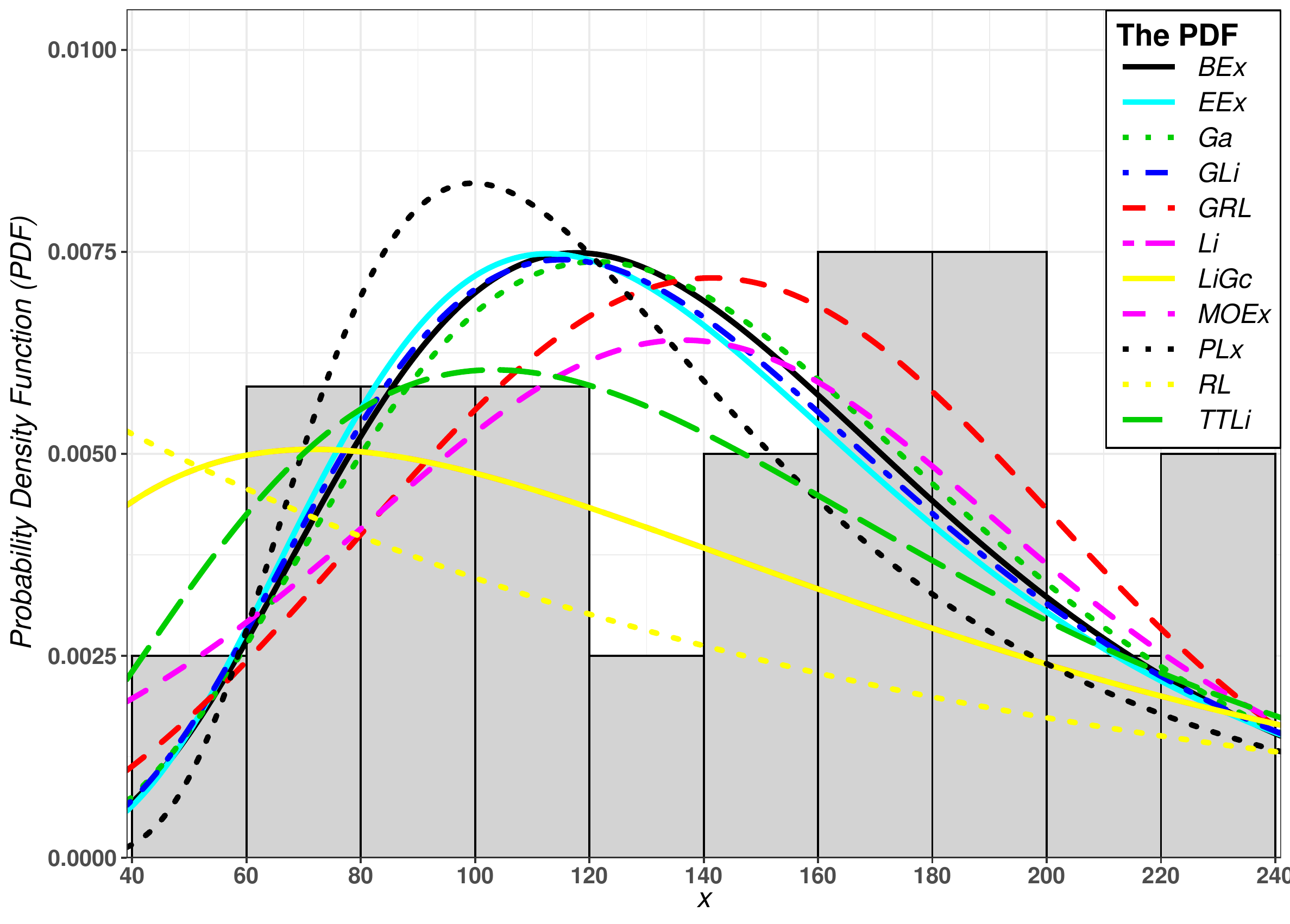}
  \caption{PDF plots}
\end{subfigure}
    \begin{subfigure}{.5\textwidth}
  \centering
  \includegraphics[width=1\linewidth]{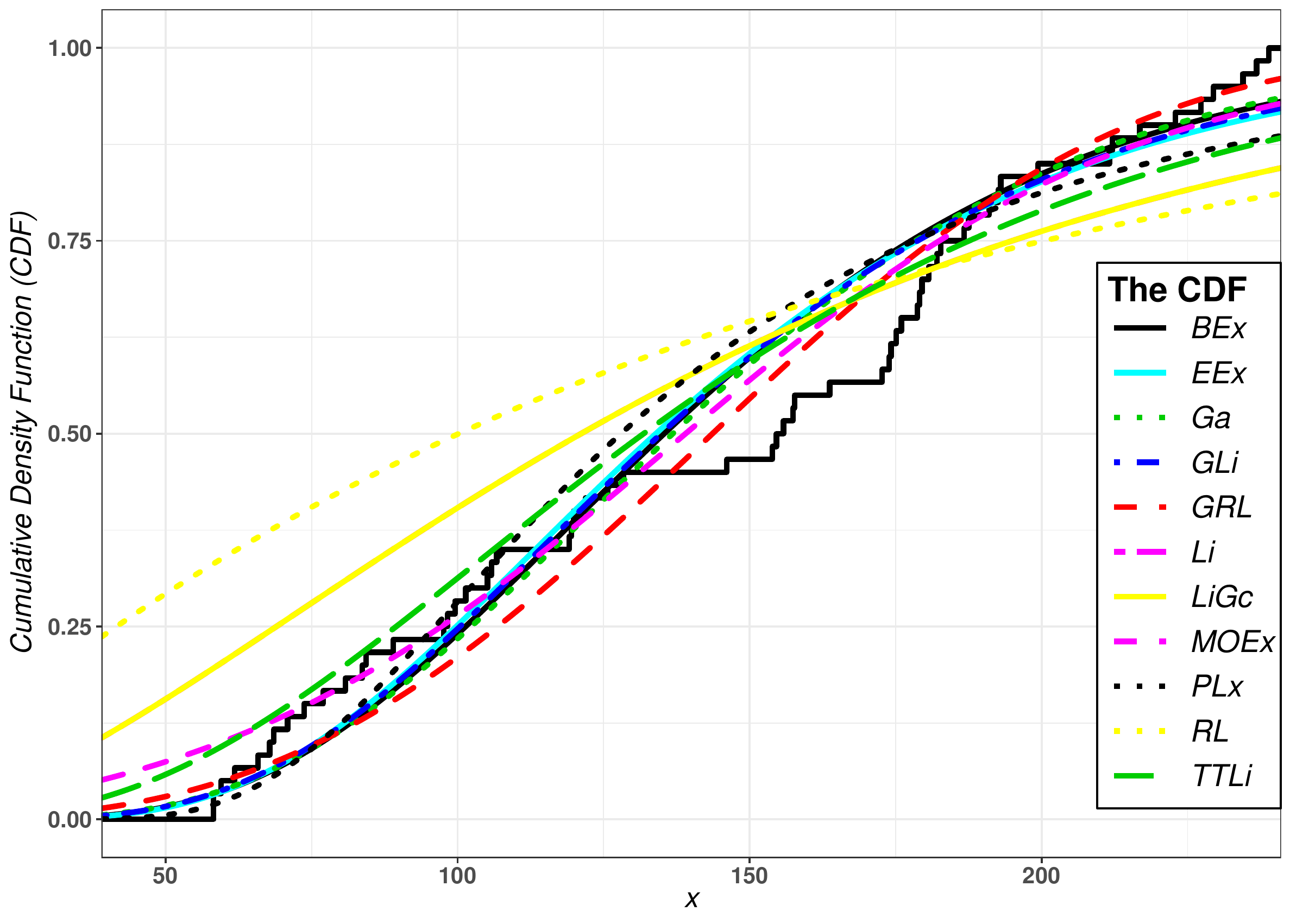}
  \caption{CDF plots}
\end{subfigure}
  \caption{PDFs and CDFs of the fitted models for epicenter data}
  \label{fig:fig4}
\end{figure}

\begin{figure}[H]
\begin{subfigure}{.5\textwidth}
  \centering
  \includegraphics[width=1\linewidth]{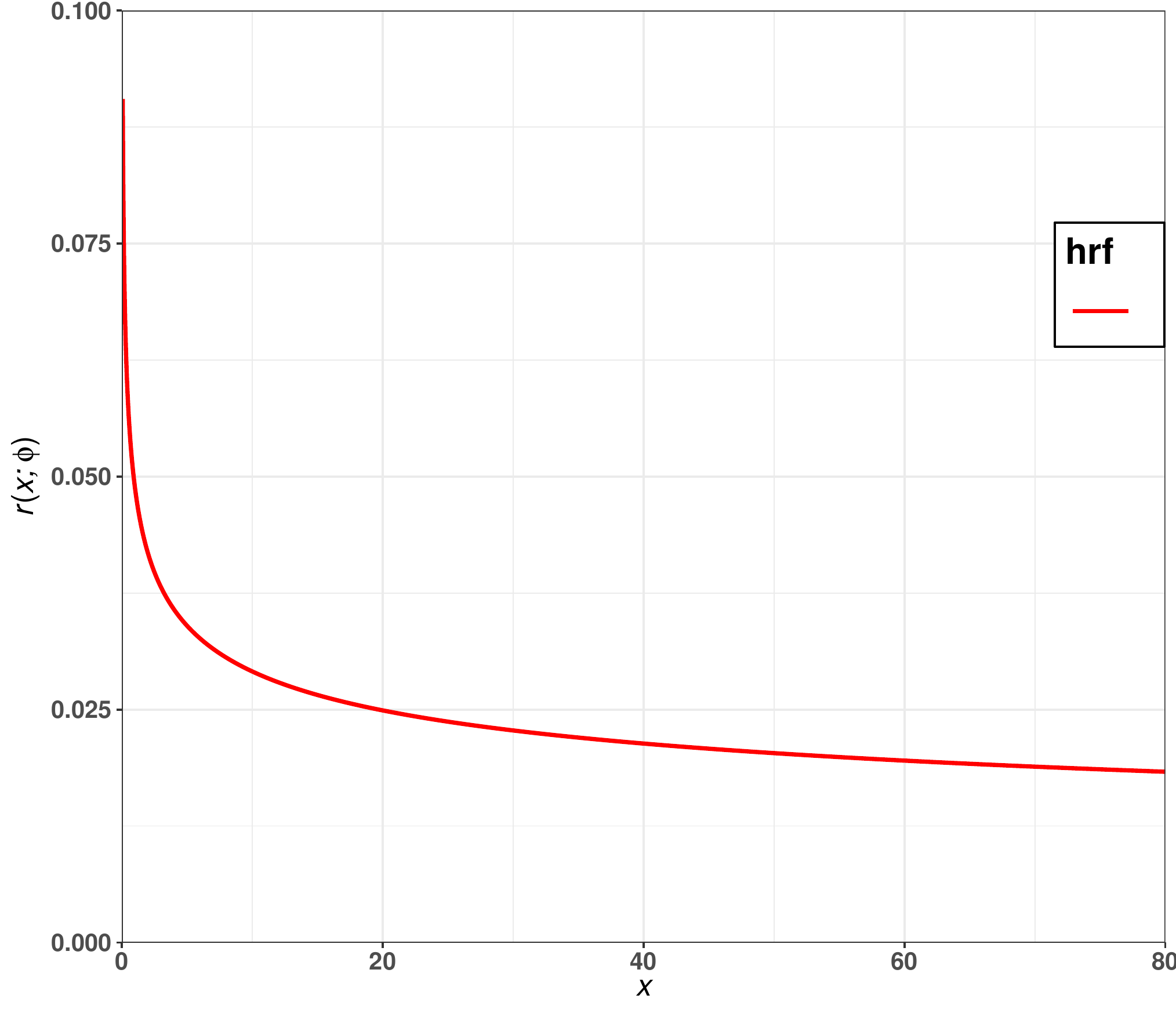}
  \caption{HRF plot}
\end{subfigure}
    \begin{subfigure}{.5\textwidth}
  \centering
  \includegraphics[width=1\linewidth]{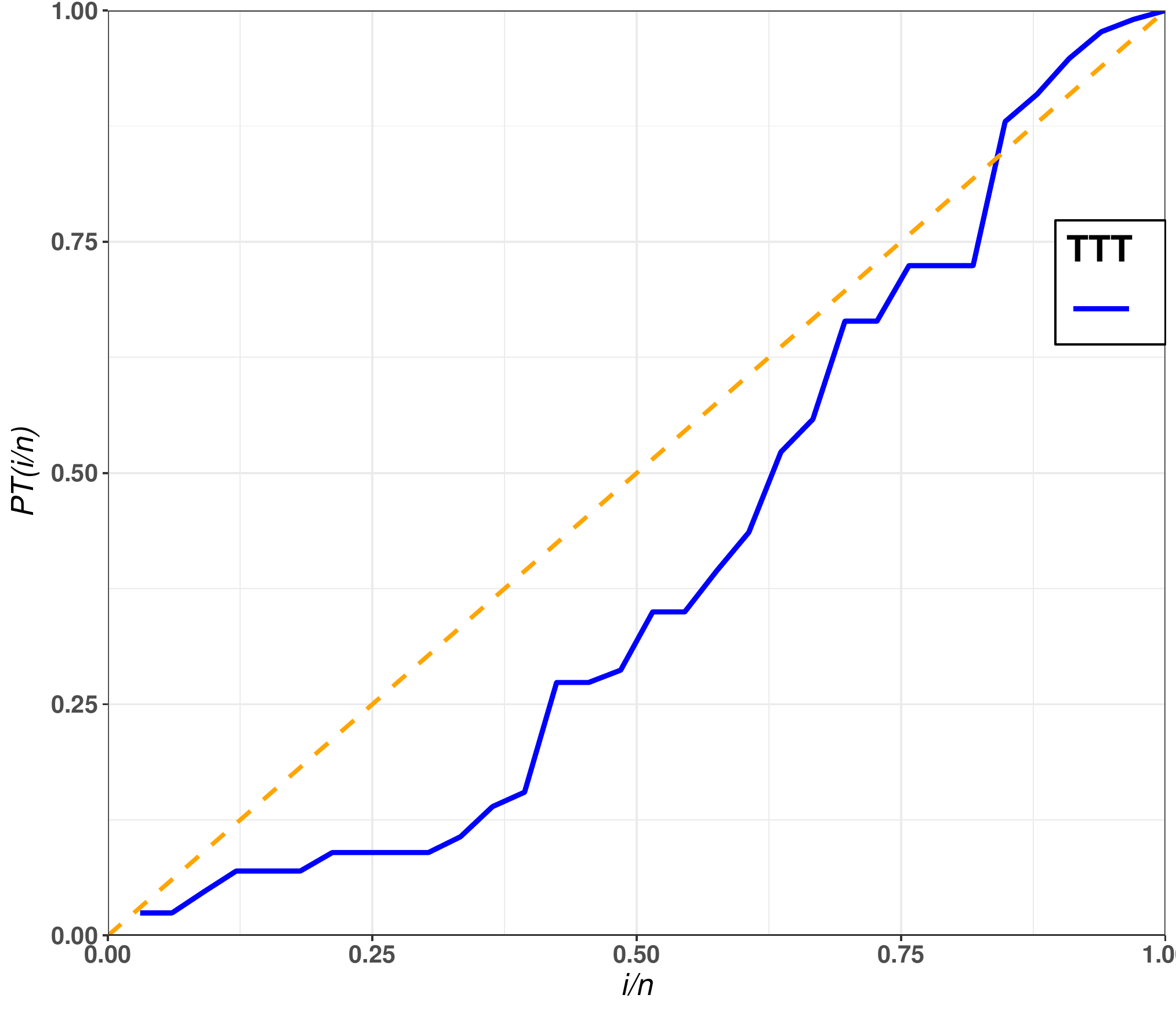}
  \caption{TTT plot}
\end{subfigure}
  \caption{The HRF plot of the GRL distribution and TTT plot for leukaemia data}
  \label{fig:fig5}
\end{figure}

\begin{figure}[H]
\begin{subfigure}{.5\textwidth}
  \centering
  \includegraphics[width=1\linewidth]{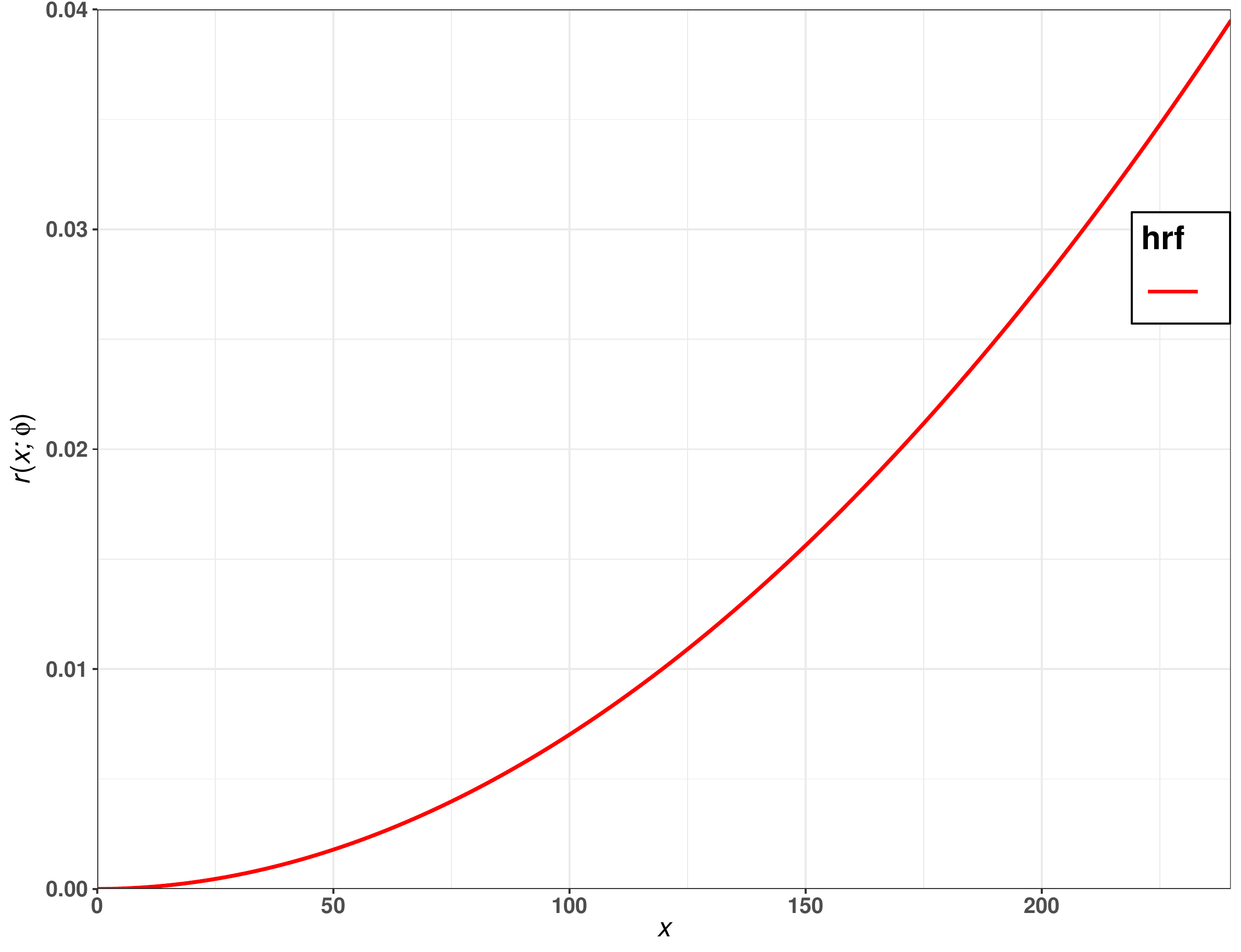}
  \caption{HRF plot}
\end{subfigure}
    \begin{subfigure}{.5\textwidth}
  \centering
  \includegraphics[width=1\linewidth]{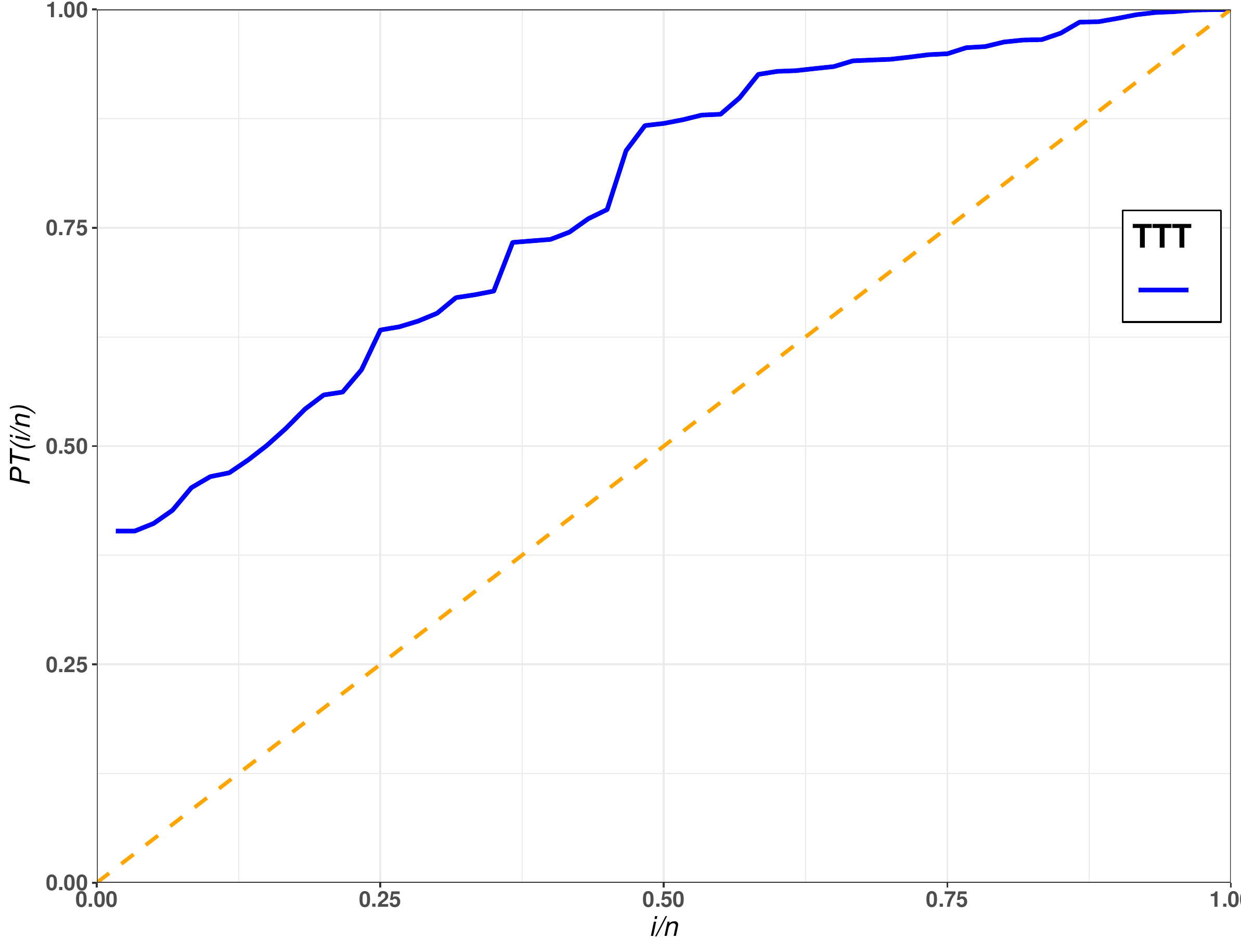}
  \caption{TTT plot}
\end{subfigure}
  \caption{The HRF plot of the GRL distribution and TTT plot for epicenter data}
  \label{fig:fig6}
\end{figure}

\section{Concluding remarks}
\label{sec6}

In this paper, we introduce a new two-parameter distribution called generalized Ramos-Louzada (GRL) distribution. Further, the mathematical properties of the GRL model are studied in detail. The GRL parameters are estimated by eight estimation methods namely: the weighted least-squares, ordinary least squares, maximum likelihood,  maximum product of spacing, Cram\'{e}r--von Mises, Anderson--Darling, Right-tail Anderson--Darling and percentile based estimators. The simulation study illustrates that the maximum product of spacing estimation method outperforms all other estimation methods. Therefore, depends on our study, we can confirm the superiority of the maximum product of spacing method for the GRL distribution. Finally, the practical importance of GRL model was reported in two real applications. The goodness of fit for the proposed data sets showed that our model returned better fitting in comparison with other well-known distributions. Further, the two real data applications show that the maximum product of spacing estimator for leukemia data and least-square estimator for epicenter data return the best estimates for the parameters of the GRL distribution.

\vspace{6pt}

\newpage
\appendix
\section*{Appendix A: Tables}
\setcounter{table}{1} \renewcommand{\thetable}{\arabic{table}}

\begin{table}[!h]
\centering
\caption{Simulation results for $\pmb \phi=(\lambda=2.0,\alpha=0.5)^{\intercal}$}%
 \label{tab:tab2}
\resizebox{\textwidth}{!}{
}
\end{table}

\end{document}